\theoremstyle{plain} 
\newtheorem{theorem}{Theorem}[section]
\newtheorem{lemma}[theorem]{Lemma}
\newtheorem{corollary}[theorem]{Corollary}
\newtheorem{remark}[theorem]{Remark}
\numberwithin{equation}{section}
\begin{document}

\begin{center}

{\bf\large Duality, Ancestral and Diffusion Processes in Models with
Selection${}^*$}

\vspace{1cm}

Shuhei Mano ${}^\dagger$

{\it Department of Statistics, University of Oxford, OX1 3TG, 
United Kingdom\\
Graduate School of Natural Sciences, Nagoya City University, 
Nagoya 467-8501, Japan\\
Japan Science and Technology Agency, 4-1-8 Honcho, 
Kawaguchi 332-0012, Japan}

\end{center}

\vspace{2cm}

${}^*$ Part of the research for this paper was carried out while 
the author visited the Department of Statistics at University of Oxford.

${}^\dagger$ The author was supported in part by Grants-in-Aids for 
Scientific Research from the Ministry of Education, Culture, Sports, 
Science and Technology of Japan.

\vspace{1cm}

\noindent
Address for correspondence: Graduate School of Natural Sciences, 
Nagoya City University, 1 Yamanohata, Mizuho-cho, Mizuho-ku, Nagoya 
467-8501, Japan;\\ E-mail: {\tt mano@nsc.nagoya-cu.ac.jp}; Tel/Fax: 
+81-(0)52-872-5821




\newpage

\noindent
{\bf abstract}

\noindent
The ancestral selection graph in population genetics was introduced by 
\citet{KroneNeuhauser1997} as an analogue of the coalescent genealogy
of a sample of genes from a neutrally evolving population. The number of 
particles in this graph, followed backwards in time, is a birth and death 
process with quadratic death and linear birth rates. In this paper 
an explicit form of the probability distribution of the number of
particles is obtained by using the density of the allele frequency in 
the corresponding diffusion model obtained by \citet{Kimura1955}. It is 
shown that the process of fixation of the allele in the diffusion model 
corresponds to convergence of the ancestral process to its stationary 
measure. The time to fixation of the allele conditional on fixation is 
studied in terms of the ancestral process.

\vspace{1cm}

\noindent
Keywords: Ancestral Process, Diffusion Model, Ancestral Selection Graph, 
Coalescent Model, Birth and Death Process

\newpage

\section{Introduction}

In population genetics, the genealogy of a sample of genes plays 
an important role in a probabilistic description of the sample.
Consider a discrete-time Wright-Fisher model of a population consisting
of $2N$ neutral genes. If we measure time in units of $2N$ generations and 
let $N$ tend to infinity, then the Wright-Fisher process converges to 
a diffusion process. The genealogy of a sample of $n$ genes, followed
backwards in time, is described by the coalescent process \citep{Kingman1982b}.
The convergence is robust under a number of different models 
(e.g. Moran model). Let $\{a_n(t);t\ge 0\}$ be the number of ancestors of 
a sample of $n$ genes at time $t$ in the past. Then the size process 
$\{a_n(t);t\ge 0\}$ is a death process with death rate $i(i-1)/2$ when 
the size is $i$. The size process will be referred to as the ancestral 
process. The distribution of $a_n(t)$ is known (see \citet{Griffiths1979}, 
\citet{Tavare1984}) and
\begin{equation}
{\mathbb P}[a_n(t)=i]=
\sum_{k=i}^{n}
\frac{(-1)^{k-i}(2k-1)(i)_{k-1}[n]_k}{i!(k-i)!(n)_k}\rho^0_k(t), 
\qquad i=1,2,...,n,
\label{neu_ni}
\end{equation}
where $\rho^0_k(t):=\exp\{-k(k-1)t/2\}, [n]_k=n(n-1)\cdots(n-k+1)$ and
$(n)_k=n(n+1)\cdots(n+k-1)$. The total variation norm between $a_n(t)$ and 
$a_n(\infty)=\delta_1$ has a simple form
\begin{equation}
\|a_n(t),\delta_1\|_{var}=1-{\mathbb P}[a_n(t)=1].
\label{neu_tv}
\end{equation}
There is a first time $W^0_{n,1}$ such that $a_n(W^0_{n,1})=1$, which is
the time to the most recent common ancestor. The density of $W^0_{n,1}$ 
follows
\begin{equation}
{\mathbb P}[W^0_{n,1}\le t]={\mathbb P}[a_n(t)=1]=
\sum_{k=1}^{n}
\frac{(-1)^{k-1}(2k-1)[n]_k}{(n)_k}\rho^0_k(t).
\label{neu_w}
\end{equation}
A bound for ${\mathbb P}[a_n(t)=1]$ is known 
(see \citet{Kingman1982b}, \citet{Tavare1984}) and
\begin{equation}
e^{-t}\le 1-{\mathbb P}[a_n(t)=1] \le 3\frac{n-1}{n+1} e^{-t}, 
\qquad n=2,3,...
\label{neu_bound}
\end{equation}

The ancestral selection graph introduced by {\citet{KroneNeuhauser1997}
is an analogue of the coalescent genealogy. Assume that a pair of 
allelic types $A_1$ and $A_2$ are segregating in a population, and 
the selective advantage of a type $A_1$ gene over a type $A_2$ gene is 
$s~(>0)$. Let $N\rightarrow\infty$ while $c=Ns$ is held constant. 
The elements are referred to as particles. Let $b_n(t)$ be the number 
of edges, or ancestral particles, in a cross section of the ancestral 
selection graph for a sample of $n$ genes at time $t$ in the past. 
In the ancestral selection graph, coalescence occurs at rate
$\alpha_i=i(i-1)/2$ and branching occurs at rate $\beta_i=2ci$ when 
the size is $i$. Then the ancestral process $\{b_n(t);t\ge 0\}$ is 
a birth and death process with rates $\beta_i$ and $\alpha_i$. A particle 
is called real if it is a part of the real genealogy of the sample, 
otherwise the particle is called virtual. If two particles reach 
a coalescing point, the resulting particle is real if and only if 
at least one of the two particles is real, otherwise the resulting 
particle is virtual. If a real particle reaches a branching point,
it splits into a real particle and into a virtual particle. 
If a virtual particle reaches a branching point, it splits into two
virtual particles. If a type $A_2$ particle reaches a branching point, 
it splits into two type $A_2$ particles. If a type $A_1$ particle reaches 
a branching point, it splits into two particles, where at least one of 
the two particles is type $A_1$. Because the death rates are quadratic 
while the the birth rates are only linear, there is a first time 
$W^c_{n,1}$ such that $b_n(W^c_{n,1})=1$. \citet{KroneNeuhauser1997} 
consider stopping the process at this time, since the genetic composition 
of the sample is determined by then. They called the ancestral particle 
at the time the ultimate ancestor. In the case of no mutation, the real 
genealogy embedded in an ancestral selection graph is the same 
as in the neutral process (\citet{KroneNeuhauser1997}; Theorem 3.12).
Thus the ancestral process of the real particles can be described by 
the neutral process $\{a_n(t);t\ge 0\}$.
In this article, we discuss properties of the ancestral process 
$\{b_n(t);t\ge 0\}$ without mutation which is not stopped upon reaching 
the ultimate ancestor. \citet{Fearnhead2002} has studied a process which 
is not stopped upon reaching the ultimate ancestor. He identifies 
the stationary distribution of the process and uses the distribution to 
characterize the substitution process to the common ancestor.

\citet{Kimura1955} studied the density of the allele frequency by analyzing 
the diffusion process to which the Wright-Fisher model with directional
selection converges. Let $x_p(t)$ be the frequency of the allele $A_1$ 
at time $t$ forward in a population in which the initial frequency is 
$p$. Then the Kolmogorov forward equation for the diffusion process 
$\{x_p(t); t\ge 0\}$ on $(0,1)$ is 
\begin{equation}
\frac{\partial \phi}{\partial t}=
\frac{1}{2}\frac{\partial^2}{\partial x^2}
\left\{
x(1-x)\phi
\right\}-2c\frac{\partial}{\partial x}
\left\{x(1-x)\phi\right\},
\end{equation}
with the initial condition $\phi(p,x;0)=\delta(x-p)$. The solution is
\begin{equation}
\phi(p,x;t)=
2(1-r^2)e^{c(r-1)}e^{2cx}
\sum_{k=0}^{\infty}\frac{V_{1k}^{(1)}(c,r)
V_{1k}^{(1)}(c,z)}{N_{1k}}\rho^c_{k+2}(t),
\label{sol_Kimura}
\end{equation}
where $r=1-2p,~z=1-2x,~\rho^c_{k+2}(t):=\exp(-\lambda_k t),~k=0,1,2,...$
and $-\lambda_k~(0<\lambda_0<\lambda_1<\cdots)$ are the eigenvalues of 
the generator. A plot of $\lambda_0$ is given in Figure 1 of 
\citet{Kimura1955}.
$V^{(1)}_{1k}(c,z)$ is the oblate spheroidal wave function (see Appendix):
\begin{equation}
V^{(1)}_{1k}(c,z)=\sum_{l\ge 0}{}'f^k_l(c) T^1_l(z),
\end{equation}
where $T^1_l(z)$ is the Gegenbauer function (may also be denoted by
$C^{\frac{3}{2}}_l(z)$) and the summation is over even values of 
$l$ if $k$ is even, odd values of $l$ if $k$ is odd. 
$N_{1k}$ is the normalization constant of $V^{(1)}_{1k}(c,z)$. 
The probability mass at the exit boundaries are 
\begin{equation}
f(p,1;t)=
2(1-r^2)e^{c(r+1)}\sum_{k=0}^{\infty}
\frac{V_{1k}^{(1)}(c,r)V_{1k}^{(1)}(c,-1)}{2\lambda_kN_{1k}}
(1-\rho^c_{k+2}(t)),
\label{fix_A1}
\end{equation}
and
\begin{equation}
f(p,0;t)=
2(1-r^2)e^{c(r-1)}\sum_{k=0}^{\infty}
\frac{V_{1k}^{(1)}(c,r)V_{1k}^{(1)}(c,1)}{2\lambda_kN_{1k}}
(1-\rho^c_{k+2}(t)).
\label{fix_A2}
\end{equation}

In Section 2, the ancestral process $\{b_n(t);t\ge 0\}$ without absorbing
states is studied. An explicit form of the probability distribution of
$b_n(t)$ is obtained by using the moment duality between the ancestral 
process and the Wright-Fisher diffusion with directional selection. 
It is also observed that the distribution of the ancestral process 
converges to a stationary distribution. In Section 3, the rate of convergence 
is discussed. In contrast to the neutral process, the final rates of 
convergence are given by the largest eigenvalue for all the states. 
Bounds for the probability that $b_n(t)$ is at the state 1 are obtained by 
an elementary martingale argument, which corresponds to the bounds 
(\ref{neu_bound}) for the neutral process. In Section 4, the ancestral 
process with absorbing states is considered. It is shown that the first 
passage times of the ancestral process $\{b_n(t);t\ge 0\}$ at the states 
$1,2,...,n-1$ are larger than that in the neutral process for all the states. 
By killing the modified process, in which the state 1 is the absorbing state, 
the form of the joint probability generating function of $b_n(t)$ and 
the number of branching events is obtained. By using this formula, 
the expectation of the total length of the edges in the ancestral selection 
graph is obtained. In Section 5, the ancestral process of the whole population 
$\{b_\infty(t);t\ge 0\}$ is studied. It is shown that the process of fixation 
of the allele in the diffusion model corresponds to convergence of 
the ancestral process to its stationary measure. The time to fixation of
an allele conditional on fixation is studied in terms of the ancestral 
process. It is shown that the density of the time to fixation of a single 
mutant gene conditional on fixation is given by the probability of the whole 
population being descended from a single real ancestral particle, regardless
of the allelic type. In the neutral process, the density of the waiting time 
until the ancestral process hits the state 1 and the density of 
the conditional fixation time are given by the probability that the ancestral 
process is at the state 1. The property does not hold in the process with 
selection.

\section{Number of ancestral particles}

In this section, we will obtain an equation that relates the moments of 
the Wright-Fisher diffusion with directional selection to a Markov process 
that specifies the number of particles (real and virtual) that are present 
in the ancestral selection graph. To derive this result, we will exploit 
the concept of duality from the theory of Markov processes 
(\citet{EK1986}; Section 4.4). If $X=\{X_t;t\ge 0\}$ and $Y=\{Y_t;t\ge 0\}$ 
are Markov processes with values in sets $Z_X$ and $Z_Y$, respectively, then 
$X$ and $Y$ are said to be dual with respect to a function $f(x,y)$ if 
the identity
\begin{equation}
{\mathbb E}_x[f(X_t,y)]={\mathbb E}_y[f(x,Y_t)]
\label{dual0}
\end{equation}
holds for every $x \in Z_X$ and $y \in Z_Y$. Duality is a useful concept 
because it allows us to use our knowledge of one process to learn about 
the other. Although there is no general procedure for identifying dual 
processes, duality can sometimes be deduced using simple generator 
calculations. Specifically, if $G_x$ is the infinitesimal generator of 
the process $X$ and $G_y$ is the infinitesimal generator of the process $Y$,
then the duality relationship shown in (\ref{dual0}) will be satisfied if 
the identity
\begin{equation}
G_x f(x,y)=G_y f(x,y)
\end{equation}
holds for all $x$ and $y$. Here we think $G_xf(x,y)$ as acting on the
$x$-variable of the function $f(x,y)$ for each fixed value of $y$.

To apply these results to the Wright-Fisher diffusion with selection, it 
will be necessary to consider the frequency $y_q(t)=1-x_p(t) (q=1-p)$ of 
the less fit allele, which is itself governed by a Wright-Fisher diffusion 
with generator:
\begin{equation}
G_yf(y)=\frac{1}{2}y(1-y)f''(y)-2cy(1-y)f'(y).
\end{equation}
Notice that the selection coefficient is negative in this case ($c\geq 0$). 
If we define the function $f(y,n)=y^n$, then a simple calculation shows that
\begin{eqnarray}
G_yf(y,n)&=&{n \choose 2}[f(y,n-1)-f(y,n)]+2cn[f(y,n+1)-f(y,n)]\nonumber\\
&=&G_nf(y,n),
\end{eqnarray}
where $G_n$ is the operator defined by
\begin{equation}
G_nf(n)={n\choose 2}[f(n-1)-f(n)]+2cn[f(n+1)-f(n)].
\end{equation}
In other words, $G_n$ is the infinitesimal generator of the birth and death
process $\{b_n(t);t\ge 0\}$ which keeps track of the number of ancestral 
particles in the ancestral selection graph. Because $f(y,n)$ is bounded, 
we can use a result of \citet{EK1986} (Chapter 4, Corollary 4.13) to deduce 
that the Wright-Fisher diffusion $\{y_q(t);t\ge 0\}$ and the birth and death 
process $b_n(t)$ are dual with respect to the function $f(y,n)$:
\begin{theorem}\label{theorem:duality}
\begin{equation}
{\mathbb E}[q^{b_n(t)}]={\mathbb E}[(y_q(t))^n], \qquad n=1,2,...
\label{dual1}
\end{equation}
\end{theorem}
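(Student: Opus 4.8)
The plan is to apply the generator criterion for duality stated just above to the function $f(y,n)=y^n$, whose intertwining relation $G_y f(y,n)=G_n f(y,n)$ has already been established by the direct computation in the excerpt. The only remaining work is to confirm that the hypotheses of the abstract duality result of \citet{EK1986} (Chapter 4, Corollary 4.13) are met, and then to extract the stated moment identity by specializing the initial conditions of the two processes.

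First I would record that $f(y,n)=y^n$ is bounded on $[0,1]\times\{1,2,\dots\}$, since $0\le y^n\le 1$ for every $y\in[0,1]$ and $n\ge 1$; this boundedness is exactly the hypothesis that lets us invoke Corollary 4.13, guaranteeing that both ${\mathbb E}_y[f(y_q(t),n)]$ and ${\mathbb E}_n[f(y,b_n(t))]$ are finite and that the generator identity holds on the relevant domain. I would also note in passing that the birth-and-death process $\{b_n(t);t\ge 0\}$ is non-explosive, because the quadratic death rate $\alpha_i=i(i-1)/2$ dominates the linear birth rate $\beta_i=2ci$ for large $i$; this ensures that $G_n$ is a genuine generator and that $f(\cdot,b_n(t))$ remains in its domain for all $t$.

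With boundedness in hand, the intertwining $G_y f=G_n f$ together with Corollary 4.13 gives the duality identity ${\mathbb E}_y[f(y_q(t),n)]={\mathbb E}_n[f(y,b_n(t))]$ for all $y\in[0,1]$ and $n\ge 1$. Specializing the diffusion's starting point to $y=q$, the left-hand side becomes ${\mathbb E}[(y_q(t))^n]$, while the right-hand side, read as the probability generating function of $b_n(t)$ started at $n$ and evaluated at $y=q$, becomes ${\mathbb E}[q^{b_n(t)}]$. This is precisely the asserted identity (\ref{dual1}).

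The hard part will be justifying the boundary behavior of the diffusion $\{y_q(t);t\ge 0\}$ on $(0,1)$: the endpoints $0$ and $1$ are exit (absorbing) boundaries, so one must check that Corollary 4.13 applies to a process that is absorbed rather than conservative on the open interval. Because the dual function $y^n$ extends continuously to the closed interval $[0,1]$ and stays bounded there, the boundary masses $f(p,0;t)$ and $f(p,1;t)$ are automatically incorporated into ${\mathbb E}[(y_q(t))^n]$, so no separate treatment of absorption is required; verifying that the domain conditions of the corollary hold up to and including the boundary, rather than only on the open interval, is the one point I would take care to spell out explicitly.
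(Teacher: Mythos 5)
Your proposal is correct, but it is not the route taken by the paper's designated proof. What you have written out --- the intertwining $G_yf(y,n)=G_nf(y,n)$ for $f(y,n)=y^n$, boundedness of $f$, non-explosion of the birth-and-death process, and the appeal to Corollary 4.13 of Chapter 4 of \citet{EK1986} --- is essentially the deduction the paper sketches in the paragraph \emph{preceding} the theorem; its formal proof environment instead gives a purely probabilistic argument via the ancestral selection graph, following \citet{AS2005}. There, both sides of (\ref{dual1}) are read as the probability that the sample of $n$ genes consists only of type $A_2$ particles: one partitions the graph into connected components, assigns types to the $b_n(t)$ ancestral particles independently with ${\mathbb P}[A_2]=q$, and propagates forward in time, using that a type $A_1$ particle meeting a branching point always leaves at least one type $A_1$ descendant, so that each component is all-$A_2$ at time $0$ if and only if it is all-$A_2$ at time $t$; conditioning on $b_n(t)$ yields ${\mathbb E}[q^{b_n(t)}]$, while the forward frequency computation yields ${\mathbb E}[(y_q(t))^n]$. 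The trade-off: your generator route is short given the displayed computation and is the one the paper itself invokes informally, whereas the graphical proof avoids all domain technicalities and supplies the probabilistic meaning of the identity that drives the later interpretations such as (\ref{dual_x}).

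Two technical points in your version deserve to be made explicit rather than deferred. First, the boundary issue you flag resolves cleanly because the diffusion coefficient $y(1-y)$ vanishes at $y=0,1$: hence $G_y$ maps polynomials on the closed interval $[0,1]$ into functions vanishing at the boundary, the absorbed diffusion still solves the martingale problem for $G_y$ on $C^2[0,1]$, and the atoms $f(p,0;t)$ and $f(p,1;t)$ are automatically carried by ${\mathbb E}[(y_q(t))^n]$, exactly as you anticipate. Second, boundedness of $f$ alone does not exhaust the hypotheses of Corollary 4.13: the dual-side generator action $G_nf(y,n)=\binom{n}{2}(y^{n-1}-y^n)+2cn(y^{n+1}-y^n)$ is unbounded in $n$, so one needs an integrable dominating variable, e.g.\ ${\mathbb E}\bigl[\sup_{t\le T} b_n(t)^2\bigr]<\infty$ for each $T$. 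This follows by coupling $b_n$ below a pure linear-birth process (suppress all deaths), whose supremum on $[0,T]$ equals its terminal value and has finite moments of all orders; your remark that the quadratic death rate dominates the linear birth rate gives non-explosion but not, by itself, this moment bound.
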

Because the right-hand side of this identity involves moments of the process 
$\{y_q(t);t\ge 0\}$, the process $\{b_n(t);t\ge 0\}$ is said to be a 
moment dual for $\{y_q(t);t\ge 0\}$. The existence of moment duals of 
Wright-Fisher diffusions with polynomial coefficients was first shown by 
\citet{Shiga1981}, and the explicit description of duality between 
the birth and death process and the Wright-Fisher diffusion with directional 
selection was discussed by \citet{AS2005}. The duality can be proved in 
terms of the ancestral selection graph (\citet{AS2005}):

\begin{proof}
Partition an ancestral selection graph ${\mathcal G}$ into disconnected 
subgraphs 
${\mathcal G}_i, i=1,2,...$ Let ${\mathcal E}_t$ be the edges,
or the ancestral particles, of a cross section of ${\mathcal G}$ taken
at time $t$ 
backward. Then, $b_n(t)=|{\mathcal E}_t|$. Each
${\mathcal E}_0\cap{\mathcal G}_i$ consists only of type $A_2$ particles 
if and only if ${\mathcal E}_t\cap{\mathcal G}_i$ consists only of
type $A_2$ particles, since at least one type $A_1$ particle survives
from time $t$ to $0$ if ${\mathcal E}_t\cap{\mathcal G}_i$ contain
type $A_1$ particles. Here the ancestral selection graph is viewed
forward in time. If a type $A_1$ particle reaches a coalescence point, 
the number of type $A_1$ particles increase by 1. If a type $A_1$ particle 
reaches a branching point and meets another particle, the resulting particle 
is always type $A_1$. Thus, a sample consists only of type $A_2$ particles 
if and only if the ancestral particles at time $t$ consists only of type 
$A_2$ particles.
\end{proof}

If a sample contains type $A_1$ particles, then for $n=0,1,...; m=1,2,...$,
\begin{equation}
{\mathbb E}[(x_p(t))^m(y_q(t))^n]=
{\mathbb E}[(1-y_q(t))^m(y_q(t))^n]=
\sum_{i=0}^m
(-1)^i{m \choose i}{\mathbb E}[q^{b_{i+n}(t)}],
\label{dual_xy}
\end{equation}
with the convention $b_0(t)=0$. In particular,
\begin{equation}
{\mathbb E}[x_p(t)]={\mathbb E}[1-y_q(t)]={\mathbb E}[1-q^{b_1(t)}]
=\sum_{k=1}^\infty{\mathbb P}[b_1(t)=k]\sum_{i=1}^k
{k\choose i}p^iq^{k-i}.
\label{dual_x}
\end{equation}
The expression follows immediately from the distribution of the number of
particles in ${\mathcal E}_t$. Since the ancestral selection graph is 
irreducible, a type $A_1$ particle is sampled if and only if ${\mathcal E}_t$ 
contains at least one type $A_1$ particle. The first summation is over 
$|{\mathcal E}_t|$ and the second summation is over the number of 
type $A_1$ particles.

By the It\^{o} formula, we have a system of differential equations
for the moments of $y_q(t)$ 
\begin{equation}
\frac{d\xi_n}{dt}=
-(\alpha_n+\beta_n)\xi_n+\alpha_n\xi_{n-1}+\beta_n\xi_{n+1},
\qquad n=1,2,...
\label{moment}
\end{equation}
where $\xi_n=\mathbb{E}[(y_q(t))^n]$. The Kolmogorov backward equation
for the ancestral process $\{b_n(t);t\ge 0\}$ without absorbing states
is also given by (\ref{moment}), where $\xi_n={\mathbb P}[b_n(t)=i]$. 
Thus, (\ref{moment}) with $\xi_n={\mathbb E}[q^{b_n(t)}]$ holds. 
The isomorphism of these equations is a consequence of (\ref{dual1}).

A realization of the ancestral selection graph consists of two disconnected 
subgraphs embedded in a diagram of a sample path of $x_p(t)$ can be seen in 
Figure 1. The abscissa is the forward time interval $(0,t)$ and ordinate is 
$x_p(t)$. Thick lines represent the real genealogy. Lines used by type $A_2$
particles are dotted. The graph contributes to 
${\mathbb E}[(x_p(t))^3y_q(t)]$ and $b_4(t)=4$. If an ancestral selection 
graph consists of the upper subgraph only, it contributes to 
${\mathbb E}[y_q(t)]={\mathbb E}[q^{b_1(t)}]$ and $b_1(t)=2$.

Using an integral transform by the Gegenbauer function (\citet{Erdelyi1954})
for $l=0,1,...;n=1,2,...;i=0,1,...$,
\begin{equation}
\int_{-1}^1 T^1_l(z)(1+z)^n(1-z)^i dz=
\frac{2^{n+i}i!(l+1)(l+2)}{(n+1)_{i+1}}{}_3F_2(-l,l+3,i+1;2,i+n+2;1),
\end{equation}
where ${}_3F_2(-l,l+3,i+1;2,i+n+2;1)$ is the generalized hypergeometric
function, and with an identity (\ref{id_fix}), it is possible to obtain
an explicit expression for the probability generating function of $b_n(t)$, 
and we have
\begin{eqnarray}
{\mathbb E}[q^{b_n(t)}]&=&
{\mathbb E}[y_q(t)^n]=
\int^1_0 (1-x)^n\phi(x,p;t) dx+f(p,0;t)\nonumber\\
&=&\frac{e^{4c q}-1}{e^{4c}-1}
+2(1-r^2)e^{c(r-1)}
\sum^{\infty}_{k=0}
\frac{V_{1k}^{(1)}(c,r)}{N_{1k}}
\left\{F_{kn}(c)-\frac{V_{1k}^{(1)}(c,1)}{2\lambda_k}
\right\}\rho^c_{k+2}(t)
\nonumber\\
&=&\sum_{i=1}^{\infty} {\mathbb P}[b_n(t)=i] q^i,
\label{pgf_b}
\end{eqnarray}
where 
\begin{equation*}
F_{kn}(c):=\sum_{l\ge 0}{}'f^k_l(c)\sum_{i=0}^{\infty}
\frac{(2c)^i}{(n+1)_{i+1}}\sum_{j=0}^{l}{}
\frac{(-l)_j(l+1)_{j+2}(i+1)_j}{2\cdot j!(j+1)!(i+n+2)_j},
\end{equation*}
$r=1-2p$ and $f(p,0;t)$ is the probability mass at 0 given in (\ref{fix_A2}).
If $n=\infty$, $F_{k\infty}(c)=0$ and $f(p,0;t)$ gives the probability 
generating function. Using a power series expansion in $q$ of 
the Gegenbauer function
\begin{equation}
T^1_l(r)=(-1)^l
\sum_{i=0}^{l}\frac{(-l)_i(l+1)_{i+2}}{2\cdot i!(i+1)!}q^i,
\qquad l=0,1,..
\end{equation}
we obtain an explicit expression for the probability distribution of $b_n(t)$:
\begin{equation}
{\mathbb P}[b_n(t)=1]=
\pi_1+8e^{-2c}\sum^{\infty}_{k=0}
\frac{V_{1k}^{(1)}(c,-1)}
{N_{1k}}
\left\{F_{kn}(c)-\frac{V_{1k}^{(1)}(c,1)}{2\lambda_k}\right\}
\rho^c_{k+2}(t),
\label{sel_n1}
\end{equation}
and
\begin{equation}
{\mathbb P}[b_n(t)=i]=\pi_i+
8e^{-2c}\sum^{\infty}_{k=0}
\frac{G_{ki}(c)}
{N_{1k}}
\left\{F_{kn}(c)-\frac{V_{1k}^{(1)}(c,1)}{2\lambda_k}\right\}
\rho^c_{k+2}(t),\qquad i=2,3,...,
\label{sel_ni}
\end{equation}
where
\begin{eqnarray*}
G_{ki}(c)&:=&
\sum_{l\ge i-1}{}'f^k_l(c)(-1)^l
\frac{(-l)_{i-1}(l+1)_{i+1}}{2(i-1)!i!}\nonumber\\
&&
+\sum_{j=1}^{i-1}
\frac{(2c)^{j-1}(2c-j)}{j\cdot(j-1)!}
\sum_{l\ge i-j-1}{}'f^k_l(c)(-1)^l
\frac{(-l)_{i-j-1}(l+1)_{i-j+1}}{2(i-j-1)!(i-j)!},
\end{eqnarray*}
and $\pi_i$ are given in (\ref{stationary}). Note that there are finite 
probabilities at the states $n+1,n+2,...$. The expected number of ancestral
particles is
\begin{eqnarray}
{\mathbb E}[b_n(t)]&=&
\pi_1e^{4c}
+8e^{-2c}\sum^{\infty}_{k=0}
\frac{V_{1k}^{(1)}(c,-1)}{N_{1k}}
\left\{F_{kn}(c)-\frac{V_{1k}^{(1)}(c,1)}{2\lambda_k}\right\}
\rho^c_{k+2}(t)\nonumber\\
&&+8e^{-2c}\sum_{i=2}^{\infty}i\sum^{\infty}_{k=0}
\frac{G_{ki}(c)}{N_{1k}}
\left\{F_{kn}(c)-\frac{V_{1k}^{(1)}(c,1)}{2\lambda_k}\right\}
\rho^c_{k+2}(t),
\end{eqnarray}
and the falling factorial moments are 
\begin{equation}
{\mathbb E}[[b_n(t)]_i]=
i!\pi_ie^{4c}
+8e^{-2c}\sum_{j=i}^{\infty}[j]_i\sum^{\infty}_{k=0}
\frac{G_{kj}(c)}{N_{1k}}
\left\{F_{kn}(c)-\frac{V_{1k}^{(1)}(c,1)}{2\lambda_k}\right\}
\rho^c_{k+2}(t),\qquad i=2,3,...
\end{equation}

For small $c$, the probability distribution is approximately
\begin{equation}
{\mathbb P}[b_n(t)=1]
={\mathbb P}[a_n(t)=1]
-2c+2c\sum_{k=2}^{n+1}(-1)^k(2k-1)
\left\{\frac{[n]_{k}}{(n)_{k}}+
\frac{k(k-1)[n]_{k-1}}{(n)_{k+1}}\right\}\rho^0_{k}(t)
+O(c^2),
\end{equation}
and for $i=2,3,...$,
\begin{eqnarray}
{\mathbb P}[b_n(t)=i]
&=&{\mathbb P}[a_n(t)=i]+2c\delta_{i,2}
-2c\sum_{k=i}^{n+1}
\frac{(-1)^{k-i}(2k-1)(i)_{k-1}}{i!(k-i)!}
\left\{
\frac{k(k-1)[n]_{k-1}}{(n)_{k+1}}\right.
\nonumber\\
&&
\left.+\frac{(k^2-k+2i-2)[n]_{k}}{(k-i+1)(k+i-2)(n)_{k}}\right\}
\rho^0_{k}(t)
+2c\frac{(i-1)_{i-1}[n]_{i-1}}{(i-1)!(n)_{i-1}}
\rho^0_{i-1}(t)+O(c^2),
\end{eqnarray}
with a convention $[n]_{n+1}=0$, where $a_n(t)$ is the number of ancestors 
of a sample of $n$ neutral genes at the time $t$ in the past.

It is possible to obtain the solution of (\ref{moment}) as a perturbation
series in $2c$, where the series is represented by eigenvalues of the neutral 
process. Let 
\begin{equation}
\xi_n=\xi_n^{(0)}+2c\xi_n^{(1)}+(2c)^2\xi_n^{(2)}+\cdots,\qquad n=1,2,... 
\end{equation}
Denote the rate matrix of the neutral process $\{a_n(t);t\ge 0\}$ by 
$Q_0=(q_{0,ij})$, where $q_{0,i+1,i}=\alpha_{i+1}, q_{0,ii}=-\alpha_i$ for 
$i=1,2,...$ and other elements are zero. Let the Laplace transform of 
$\xi_n^{(i)}(t)$ be $\nu_n^{(i)}(\lambda)$. It is straightforward to show that
\begin{equation}
\nu^{(i)}=\{(Q_0-\lambda E)^{-1}C\}^i(\lambda E-Q_0)^{-1}\xi^{(0)}(0),
\qquad i=1,2,...
\end{equation}
where $C=(c_{ij})$ is given by $c_{ii}=i, c_{i,i+1}=-i$ for $i=1,2,...$
and other elements are zero. Note that the inverse Laplace transform of 
the element in the
$n$-th row and $i$-th column of the matrix 
$\{(Q_0-\lambda E)^{-1}C\}^j(\lambda E-Q_0)^{-1}$ gives the $j$-th order
coefficients in $2c$ of ${\mathbb P}[b_n(t)=i]$.    

Let $r_n(t)$ be the number of branching events in the time interval $(0,t)$
in the ancestral selection graph of a sample of $n$ genes, where $r_n(0)=0$.
The joint probability generating functions of $b_n(t)$ and $r_n(t)$
satisfy a system of differential equation
\begin{equation}
\frac{d\xi_n}{dt}=
-(\alpha_n+v\beta_n)\xi_n+\alpha_n\xi_{n-1}+v\beta_n\xi_{n+1}
-(1-v)\beta_n\xi_n,
\qquad n=1,2,..
\label{killing}
\end{equation}
with the initial condition $\xi_n(0)=q^n$, where 
$\xi_n={\mathbb E}[q^{b_n(t)}v^{r_n(t)}]$. The solution is given 
by killing of the modified process $\{\tilde{b}_n(t);t\ge 0\}$ in which
the selection coefficient is $vc$, and we have
\begin{equation}
{\mathbb E}[q^{b_n(t)}v^{r_n(t)}]=
{\mathbb E}\left[
q^{\tilde{b}_n(t)}\exp\left\{-2c(1-v)\int_0^t\tilde{b}_n(u)du\right\}\right].
\end{equation}
By setting $v=0$ in the last expression, we obtain the identity
\begin{equation}
{\mathbb E}[q^{b_n(t)},r_n(t)=0]=
{\mathbb E}
\left[
q^{a_n(t)}\exp\left\{-2c\int_0^ta_n(u)du\right\}
\right],
\end{equation}
which shows the Poisson nature of the branching in the ancestral 
process $\{b_n(t);t\ge 0\}$. The integral is the total length of the edges
in the neutral genealogy without branching in $(0,t)$. In particular,
${\mathbb P}[b_1(t)=1,r(t)=0]=e^{-2ct}$.

\section{Convergence}

Standard results on birth and death processes 
(see, e.g., \citet{KarlinTaylor1975}) 
gives the stationary measure of the ancestral process $\{b_n(t); t\ge 0\}$.
It is straightforward to obtain the stationary measure
\begin{equation}
\pi_i:={\mathbb P}[b_n(\infty)=i]=\frac{(4c)^i}{i!(e^{4c}-1)},
\quad i=1,2,...,
\label{stationary}
\end{equation}
which is the zero-truncated Poisson distribution.
Since the ancestral process of the real particles is the neutral process
$\{a_n(t);t\ge 0\}$ and the number of real particles becomes 1 in 
finite time, $\pi_1$ is the probability that there are no virtual 
particles. 

It is clear from (\ref{sel_n1}) and (\ref{sel_ni}) that for $i=1,2...$,
\begin{equation}
\pi_i-{\mathbb P}[b_n(t)=i]=O(\rho^c_2(t)),\qquad t\rightarrow\infty.
\label{sel_conv}
\end{equation}
For small $c$, the final rates of convergence are approximately
\begin{equation}
\lim_{t\rightarrow\infty}(\rho_2^{c}(t))^{-1}
\left\{\pi_1-{\mathbb P}[b_n(t)=1]\right\}
=3e^{-2c}
\left\{\frac{n-1}{n+1}-\frac{4c}{(n+1)(n+2)}\right\}
-2ce^{-2c}\delta_{n,1}+O(c^2),
\end{equation}
\begin{equation}
\lim_{t\rightarrow\infty}(\rho_2^{c}(t))^{-1}
\left\{\pi_2-{\mathbb P}[b_n(t)=2]\right\}
=-3e^{-2c}
\left\{\frac{n-1}{n+1}-\frac{2n(n-1)c}{n+2}\right\}
-2ce^{-2c}\delta_{n,1}+O(c^2),
\end{equation}
and 
\begin{equation}
\lim_{t\rightarrow\infty}(\rho_2^{c}(t))^{-1}
\left\{\pi_i-{\mathbb P}[b_n(t)=i]\right\}
=-3e^{-2c}\frac{n-1}{n+1}\frac{(2c)^{i-2}}{(i-2)!}+O(c^{i-1}),
\qquad i=3,4,...
\end{equation}
In contrast to the neutral process, the final rates of 
convergence are given by the largest eigenvalue for all the states. 
In the neutral process, we have
\begin{equation}
\lim_{t\rightarrow\infty}(\rho^0_i(t))^{-1}{\mathbb P}[a_n(t)=i]
=\frac{(i)_i[n]_i}{i!(n)_i},\qquad i=1,2,...,n.
\end{equation} 
The total variation norm has no simple form as in (\ref{neu_tv}).

A simple argument gives a bound for ${\mathbb P}[b_n(t)=1]$. 
The event that the number of ancestral particles is 1 is a subset of 
the event that the number of real particle is 1, and we have
\begin{equation}
{\mathbb P}[b_n(t)=1]\le{\mathbb P}[a_n(t)=1], \qquad n=1,2,...
\label{sel_bound_a}
\end{equation}

An elementary argument on a martingale gives explicit bounds for 
${\mathbb P}[b_n(t)=1]$, which corresponds to the bounds (\ref{neu_bound}})
in the neutral process. Let $\eta(n;c)$ satisfy a recursion
\begin{equation}
(\lambda_0-\alpha_n-\beta_n)\eta(n;c)+\alpha_n\eta(n-1;c)+
\beta_n\eta(n+1;c)=0,
\qquad n=1,2,...
\label{rec_eta}
\end{equation}
with the boundary condition $\eta(1;c)=-2c$. Since $\eta$ is an eigenvector
of the transition probability matrix of the ancestral process 
$\{b_n(t);t\ge 0\}$, $\eta(b_n(t);c)(\rho_2^{c}(t))^{-1}$ is a martingale 
to the ancestral process (see, e.g., \citet{KarlinTaylor1975}). Then,
\begin{equation}
{\mathbb E}[\eta(b_n(t);c)]=\eta(n;c)\rho_2^{c}(t).
\label{martingale}
\end{equation}
Although the explicit form of $\eta(n;c)$ is not available, it is possible 
to obtain an asymptotic form. Because of
\begin{equation}
\frac{\eta(n;c)}{\eta(n-1;c)}\rightarrow 
1+\frac{2\lambda_0}{n(n-1)}+O(n^{-3}),
\qquad 
n\rightarrow\infty,
\end{equation}
we deduce the asymptotic form
$\eta(n;c)\approx \eta(\infty;c)(1-2\lambda_0/n)$, where $\eta(\infty;c)$
is a function of $c$. Although the explicit form of $\eta(\infty;c)$ is 
not available, it is very close to $3\exp(-c)$ (See Figure 2). 

For small $c$, $\eta(n;c)$ can be expanded into a power series in $c$:
\begin{equation}
\eta(n;c)=3\frac{n-1}{n+1}
\left\{
1-\frac{n^2+n+2}{(n-1)(n+2)}c+\frac{4(3n^2+10n+18)}{25(n+2)(n+3)}c^2
\right\}+O(c^3),\qquad n=2,3,...
\label{etainf}
\end{equation}
Note that $\eta(\infty;c)$ is not exactly equal to $3\exp(-c)$.

\begin{lemma}\label{lemma:monotonic}
$\eta(n;c)$ is monotonically increasing in $n$.
\end{lemma}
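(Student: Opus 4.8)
The plan is to reduce the monotonicity of $\eta(\cdot\,;c)$ to a single sign-change property of this eigenvector, via a discrete ``flux'' identity obtained by summing the recursion (\ref{rec_eta}) against the stationary measure (\ref{stationary}). Writing $D(n):=\eta(n+1;c)-\eta(n;c)$, I first rewrite (\ref{rec_eta}) in the conservative form
\[
\beta_n D(n)-\alpha_n D(n-1)=-\lambda_0\,\eta(n;c),\qquad n=1,2,\dots,
\]
where $\alpha_1=0$, so that $\eta(0;c)$ never enters. The stationary measure satisfies the detailed-balance relation $\pi_n\beta_n=\pi_{n+1}\alpha_{n+1}$, which is immediate from (\ref{stationary}). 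Multiplying the displayed identity by $\pi_n$ and using detailed balance in the shifted form $\pi_n\alpha_n=\pi_{n-1}\beta_{n-1}$ makes the left-hand side telescope: setting $J(n):=\pi_n\beta_n D(n)$, one obtains $J(n)-J(n-1)=-\lambda_0\pi_n\eta(n;c)$ with $J(0)=0$.

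Summing from $1$ to $n$ gives $J(n)=-\lambda_0\sum_{m=1}^{n}\pi_m\eta(m;c)$. Since $\pi_n\beta_n\to0$ while $D(n)\to0$ (by the asymptotics $\eta(n;c)\approx\eta(\infty;c)(1-2\lambda_0/n)$ recorded above), we have $J(n)\to0$, and letting $n\to\infty$ yields the orthogonality relation $\sum_{m\ge1}\pi_m\eta(m;c)=0$; that is, $\eta(\cdot\,;c)$ is $\pi$-orthogonal to the constant eigenvector belonging to the eigenvalue $0$. Substituting this back,
\[
D(n)=\frac{\lambda_0}{\pi_n\beta_n}\sum_{m=n+1}^{\infty}\pi_m\,\eta(m;c),\qquad n=1,2,\dots,
\]
so that $\eta(n+1;c)>\eta(n;c)$ for every $n$ is \emph{equivalent} to the statement that every right-hand tail $\sum_{m>n}\pi_m\eta(m;c)$ is strictly positive.

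The remaining, and genuinely decisive, point is to control the sign pattern of $\eta(\cdot\,;c)$. Since $\eta(1;c)=-2c<0$ while $\eta(\infty;c)>0$, the orthogonality relation already forces $\eta(\cdot\,;c)$ to take both signs, hence to change sign at least once. The claim is that it changes sign exactly once, from negative to positive. Granting this, if $m_0$ is the unique sign-change index then for $n\ge m_0$ the tail is a sum of strictly positive terms, while for $n<m_0$ the tail equals $-\sum_{m\le n}\pi_m\eta(m;c)$, a sum of strictly negative terms carrying a minus sign; in either case the tail is strictly positive, and the displayed formula gives $D(n)>0$.

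I expect the single-sign-change property to be the main obstacle, because it is a spectral fact rather than a consequence of the recursion alone: $-\lambda_0$ is the eigenvalue at the spectral gap (the largest eigenvalue of the generator below $0$), and the eigenfunction associated with the spectral-gap eigenvalue of a reversible birth-and-death generator has exactly one sign change (the Sturm/Karlin--McGregor oscillation theorem for birth-and-death processes; see \citet{KarlinTaylor1975}). I would establish it either by invoking that oscillation theory directly, or self-containedly by a nodal argument: a maximum-principle computation shows that $\eta(\cdot\,;c)$ can have no positive local minimum and no negative local maximum, since at such a point the sign of $-\lambda_0\eta(n;c)$ would contradict the sign of $\beta_nD(n)-\alpha_nD(n-1)$; feeding this into the variational characterization of $\lambda_0$ as the smallest positive eigenvalue would rule out a second sign change. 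I would emphasize that the maximum principle alone is not enough — the higher eigenfunctions satisfy the same local constraints yet change sign repeatedly — so the argument must genuinely use that $-\lambda_0$ sits at the spectral gap, and likewise that the flux identity, not the sign count by itself, is what excludes an overshoot of $\eta$ past $\eta(\infty;c)$.
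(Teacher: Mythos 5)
Your proposal is correct and follows essentially the same route as the paper: your telescoped flux identity $J(n)=\pi_n\beta_n D(n)$ with $J(n)-J(n-1)=-\lambda_0\pi_n\eta(n;c)$ is exactly the paper's inductive computation of $\eta(l+1;c)-\eta(l;c)$, your orthogonality relation $\sum_{m\ge 1}\pi_m\eta(m;c)=0$ is the paper's (\ref{etapi}), and the exactly-one-sign-change property that you correctly identify as the decisive spectral ingredient is supplied in the paper by Gantmacher's oscillatory-matrix theorem applied to $E+Q_c(2N)^{-1}$ --- the same fact you propose to import from Sturm/Karlin--McGregor oscillation theory for reversible birth-and-death generators. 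The only (minor) divergence is how orthogonality is obtained: the paper lets $t\to\infty$ in the martingale identity (\ref{martingale}), whereas you derive it from $J(n)\to 0$ using the rapid decay of $\pi_n\beta_n$ together with boundedness of $D(n)$; both are legitimate, and your case analysis of the tail $\sum_{m>n}\pi_m\eta(m;c)$ on either side of the sign-change index reproduces the paper's two cases $l\ge L-1$ and $2\le l\le L-2$, with the base case $\eta(2;c)-\eta(1;c)=\lambda_0>0$ following from the recursion (\ref{rec_eta}) at $n=1$ in both treatments.
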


\begin{proof}
Denote the rate matrix of the ancestral process $\{b_n(t);t\ge 0\}$ by 
$Q_c=(q_{c,ij})$, where 
$q_{c,i+1,i}=\alpha_{i+1}, q_{c,ii}=-(\alpha_i+\beta_i), q_{c,i,i+1}=\beta_i$
for $i=1,2,...$ and other elements are zero. $\eta$ is an eigenvector of 
an oscillatory matrix $E+Q_c(2N)^{-1}$ which belongs to the second largest 
eigenvalue $1-\lambda_0(2N)^{-1}$. An eigenvector of an oscillatory matrix 
which belongs to the second largest eigenvalue has exactly one variation of 
sign in the coordinates (see, \citet{Gantmacher1959}, pp. 105). Assume 
$\eta(i;c)>0,~i\ge L$ and $\eta(i;c)\le 0,~1\le i\le L-1$. 
Suppose $l\ge L-1$. By an induction we deduce from (\ref{rec_eta})
that
\begin{eqnarray*}
\eta(l+1;c)-\eta(l;c)
&=&\frac{\alpha_l}{\beta_l}
\{\eta(l;c)-\eta(l-1;c)\}-\lambda_0\eta(l;c)
\nonumber\\
&=&\frac{(l-1)!}{(4c)^{l-1}}
\left\{
\eta(2;c)-\eta(1;c)-\frac{\lambda_0}{\pi_2}
\sum_{i=2}^l\eta(i;c)\pi_i\right\}.
\end{eqnarray*}
By taking $t=\infty$ in (\ref{martingale}) it follows that
\begin{equation}
\sum_{i=1}^{\infty}\eta(i;c)\pi_i=0.
\label{etapi}
\end{equation}
Thus,
\begin{equation}
\eta(l+1;c)-\eta(l;c)=\frac{\lambda_0}{8c^2\pi_{l-1}}
\sum_{i=l+1}^\infty\eta(i;c)\pi_i>0.
\end{equation}
Next, suppose $2\le l\le L-2$. We have
\begin{equation}
\eta(l+1;c)-\eta(l;c)
=\frac{(l-1)!}{(4c)^{l-1}}
\left\{
\eta(2;c)-\eta(1;c)-\frac{\lambda_0}{\pi_2}
\sum_{i=2}^l\eta(i;c)\pi_i
\right\}>0.
\end{equation}
Finally, $\eta(2;c)-\eta(1;c)=\lambda_0>0$.
\end{proof}

From Lemma \ref{lemma:monotonic} it follows that
\begin{eqnarray}
{\mathbb P}[b_n(t)=1]\eta(1;c)+{\mathbb P}[b_n(t)>1]\eta(2;c)
&\le&{\mathbb E}[\eta(b_n(t);c)]\nonumber\\
&\le&{\mathbb P}[b_n(t)=1]\eta(1;c)
+{\mathbb P}[b_n(t)>1]\eta(\infty;c),
\end{eqnarray}
here we note that there are finite probabilities at the states larger
than $n$. Then, from (\ref{martingale}) we have the following bounds:

\begin{theorem}\label{theorem:bound}
If $\eta(n;c)$ satisfies the recursion (\ref{rec_eta}), then
\begin{equation}
\frac{\eta(n;c)e^{-\lambda_0 t}+2c}{\eta(\infty;c)+2c}\le
1-{\mathbb P}[b_n(t)=1]\le
\frac{\eta(n;c)e^{-\lambda_0 t}+2c}{\lambda_0},
\qquad n=1,2,...
\end{equation}
\end{theorem}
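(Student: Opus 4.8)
The plan is to read the two bounds off the sandwich on $\mathbb{E}[\eta(b_n(t);c)]$ displayed immediately above the statement, substitute the martingale identity (\ref{martingale}), and solve the two resulting \emph{linear} inequalities for $1-\mathbb{P}[b_n(t)=1]$. All the structural work is already carried by Lemma \ref{lemma:monotonic} and by the martingale relation; the theorem itself then reduces to bookkeeping with two scalar identities, the boundary value $\eta(1;c)=-2c$ and the equality $\eta(2;c)-\eta(1;c)=\lambda_0$ recorded at the end of the proof of the lemma.

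First I would justify the sandwich. Writing $P_{>1}:=1-\mathbb{P}[b_n(t)=1]$, split the expectation over $\{b_n(t)=1\}$ and $\{b_n(t)>1\}$. On the first event $\eta(b_n(t);c)=\eta(1;c)=-2c$. On the second event, monotonicity of $\eta$ in its first argument (Lemma \ref{lemma:monotonic}) gives $\eta(2;c)\le\eta(b_n(t);c)\le\eta(\infty;c)$, where $\eta(\infty;c)=\sup_i\eta(i;c)$ is finite by the asymptotic form $\eta(n;c)\approx\eta(\infty;c)(1-2\lambda_0/n)$ recorded earlier. Multiplying by the respective probabilities and summing reproduces the displayed double inequality, which in the notation $\mathbb{P}[b_n(t)=1]=1-P_{>1}$ reads
$$-2c+P_{>1}(\eta(2;c)+2c)\le\mathbb{E}[\eta(b_n(t);c)]\le -2c+P_{>1}(\eta(\infty;c)+2c).$$

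Next I would insert $\mathbb{E}[\eta(b_n(t);c)]=\eta(n;c)\rho_2^c(t)=\eta(n;c)e^{-\lambda_0 t}$ from (\ref{martingale}) for the middle term and add $2c$ throughout, so that $\eta(n;c)e^{-\lambda_0 t}+2c$ is trapped between $P_{>1}(\eta(2;c)+2c)$ and $P_{>1}(\eta(\infty;c)+2c)$. Both coefficients are positive: $\eta(\infty;c)+2c>0$ since $\eta(\infty;c)>0$, and $\eta(2;c)+2c=\eta(2;c)-\eta(1;c)=\lambda_0>0$. Dividing the left-hand inequality by $\lambda_0$ and the right-hand inequality by $\eta(\infty;c)+2c$ isolates $P_{>1}$ and produces exactly the two stated bounds, the upper one $(\eta(n;c)e^{-\lambda_0 t}+2c)/\lambda_0$ from the sharper (inner) endpoint $\eta(2;c)$ and the lower one $(\eta(n;c)e^{-\lambda_0 t}+2c)/(\eta(\infty;c)+2c)$ from the outer endpoint $\eta(\infty;c)$.

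The computation is elementary once these ingredients are assembled, so I anticipate no serious analytic obstacle; the only point demanding care is the validity and sharpness of the sandwich. Sharpness on the lower side rests on $\eta(2;c)$ being the minimum of $\eta$ over all states $\ge 2$, which is precisely the content of Lemma \ref{lemma:monotonic}, while sharpness on the upper side rests on $\eta$ being bounded with supremum $\eta(\infty;c)$, which is why the asymptotic behaviour of $\eta(n;c)$ as $n\to\infty$ must be invoked and not merely its monotonicity. Since $b_n(t)$ charges states above $n$, I would note explicitly that it is the full expectation over all states $1,2,\dots$ that the martingale identity controls, so that the two endpoints $\eta(2;c)$ and $\eta(\infty;c)$ genuinely bracket the support.
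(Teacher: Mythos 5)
Your proposal is correct and takes essentially the same route as the paper: the two-sided bound $\mathbb{P}[b_n(t)=1]\eta(1;c)+\mathbb{P}[b_n(t)>1]\eta(2;c)\le\mathbb{E}[\eta(b_n(t);c)]\le\mathbb{P}[b_n(t)=1]\eta(1;c)+\mathbb{P}[b_n(t)>1]\eta(\infty;c)$ obtained from Lemma \ref{lemma:monotonic}, combined with the martingale identity (\ref{martingale}) and the identifications $\eta(1;c)=-2c$ and $\eta(2;c)+2c=\eta(2;c)-\eta(1;c)=\lambda_0$. Your explicit remarks on the positivity of the two divisors and on the probability mass at states above $n$ merely spell out what the paper leaves tacit, so nothing is missing.
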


Figure 3 shows the bounds when $c=0,0.1,1$ and 8 for $n=10$. When $c>0$,
in contrast to the neutral case, the states larger than 1 have finite 
probabilities in the stationary measure (\ref{stationary}) and 
the upper and lower bounds do not converge to the same value.
Figure 3b shows that the upper bound is sharp for small values of $c$ 
and loose for intermediate values of $c$ (Figure 3c), and that both 
the upper and lower bounds converge to $1$ as $c$ tends to infinity.

\section{First passage times}

Let 
\begin{equation}
W^c_{n,i}:=\inf\{t\ge 0;b_n(t)=i\}, \qquad i=1,2,...
\end{equation}
and $\{b_{n}^1(t);W^c_{n,1}\ge t\ge 0\}$ be a modified process, 
where there is an absorbing state at 1, or the ultimate ancestor.
The modified 
process is the same as that introduced for ancestral recombination 
graph, where $4c$ is replaced by the recombination parameter $\rho$ 
(\citet{Griffiths1991}). Theorems 1, 2, 3 in \citet{Griffiths1991} 
hold for the modified process. The modified process was studied by 
\citet{KroneNeuhauser1997}. Here, modified processes
$\{b_{n}^i(t);W^c_{n,i}\ge t\ge 0\}$, where there is an absorbing state at
$i=1,2,...,n-1$, are studied to discuss the first passage times of 
the ancestral process $\{b_n(t);t\ge 0\}$ at the states $1,2,...,n-1$. 

It is possible to show that the expected first passage times of 
the ancestral process $\{b_n(t);t\ge 0\}$ at the states $1,2,...,n-1$ 
are larger than those in the neutral process $\{a_n(t);t\ge 0\}$.
${\mathbb E}[W^c_{n,1}]$ is given in \citet{KroneNeuhauser1997}.

\begin{theorem}\label{theorem:passage}
Let 
\begin{equation}
W^0_{n,i}:=\inf\{t\ge 0;a_n(t)=i\}, \qquad i=1,2,...
\end{equation}
Then,
\begin{equation}
{\mathbb E}[W^c_{n,i}]
=2\sum_{k=i}^{n-1}\sum_{j=0}^{\infty}\frac{(4c)^j}{(k)_{j+2}}
>{\mathbb E}[W^0_{n,i}],
\qquad i=1,2,...,n-1,
\end{equation}
where $W^c_{n,1}$ is the time to the ultimate ancestor.
\end{theorem}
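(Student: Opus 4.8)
\emph{Proof proposal.} The plan is to exploit the fact that the birth and death process $\{b_n(t);t\ge 0\}$ is skip-free to the left: a coalescence lowers the number of particles by exactly one, so the process started at $n$ cannot reach a state $i<n$ without first passing successively through $n-1,n-2,\dots,i+1,i$. Letting $T_k$ denote the expected time to descend from $k$ to $k-1$, the strong Markov property gives the additive decomposition
\begin{equation*}
{\mathbb E}[W^c_{n,i}]=\sum_{k=i+1}^{n}T_k,
\end{equation*}
which reduces the problem to evaluating the single-step passage time $T_k$.

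To find $T_k$ I would condition on the first jump out of state $k$. The sojourn time is exponential with rate $\alpha_k+\beta_k$; with probability $\alpha_k/(\alpha_k+\beta_k)$ the process drops to $k-1$ and is finished, while with the complementary probability it rises to $k+1$, whence it must first return to $k$ (mean time $T_{k+1}$) and then descend (a further mean time $T_k$). This yields $T_k=1/\alpha_k+(\beta_k/\alpha_k)T_{k+1}$, whose iteration gives the standard birth and death series
\begin{equation*}
T_k=\sum_{m=0}^{\infty}\frac{\beta_k\beta_{k+1}\cdots\beta_{k+m-1}}{\alpha_k\alpha_{k+1}\cdots\alpha_{k+m}}.
\end{equation*}
Inserting $\alpha_j=j(j-1)/2$ and $\beta_j=2cj$ and simplifying the resulting products of rising factorials collapses the summand to $2(4c)^m/(k-1)_{m+2}$, so that $T_k=2\sum_{m\ge0}(4c)^m/(k-1)_{m+2}$. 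Summing over $k$ from $i+1$ to $n$ and shifting the outer index by one then reproduces exactly the double sum $2\sum_{k=i}^{n-1}\sum_{j\ge0}(4c)^j/(k)_{j+2}$ in the statement.

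For the strict inequality I would simply isolate the $j=0$ term. When $c=0$ the branching rate vanishes, $T_k$ reduces to the single exponential holding time $2/[k(k-1)]$, and hence ${\mathbb E}[W^0_{n,i}]=2\sum_{k=i}^{n-1}1/(k)_2$, which is precisely the $j=0$ contribution to ${\mathbb E}[W^c_{n,i}]$. Since every term with $j\ge1$ is strictly positive once $c>0$, the inequality follows term by term. (As a sanity check, $i=1$ telescopes to the familiar ${\mathbb E}[W^0_{n,1}]=2(n-1)/n$.)

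The step I expect to require the most care is justifying the formal solution of the recursion, namely that $T_k$ is finite and that the residual term $(\beta_k\cdots\beta_{k+m})/(\alpha_k\cdots\alpha_{k+m})\cdot T_{k+m+1}$ left after truncating the iteration vanishes as $m\to\infty$. This is the one place where it matters that the unstopped process may wander to states well above $n$. The governing estimate is that consecutive summands have ratio $\beta_{k+m}/\alpha_{k+m+1}=4c/(k+m+1)\to0$: the quadratic death rate dominates the linear birth rate, so the series converges faster than geometrically, $T_k<\infty$, and the manipulation of the recursion is therefore legitimate.
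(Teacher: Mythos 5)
Your proposal is correct and takes essentially the same route as the paper: the paper likewise invokes the standard birth--death first-passage machinery (citing Karlin and Taylor), solving the recursion $(\alpha_n+\beta_n)\zeta(n)-\alpha_n\zeta(n-1)-\beta_n\zeta(n+1)=1$ with $\zeta(i)=0$ to obtain exactly your double series $2\sum_{k=i}^{n-1}\sum_{j\ge 0}(4c)^j/(k)_{j+2}$, and then gets the strict inequality precisely as you do, by recognizing the $j=0$ terms as $2\sum_{k=i}^{n-1}1/(k(k+1))=2(1/i-1/n)={\mathbb E}[W^0_{n,i}]$; your skip-free decomposition into single-step descent times $T_k=1/\alpha_k+(\beta_k/\alpha_k)T_{k+1}$ is the standard equivalent organization of that same computation. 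The finiteness issue you rightly flag is dispatched in the paper by first verifying almost-sure hitting via the divergence criterion $\sum_{m\ge i}\prod_{k=i+1}^{m+1}\alpha_k/\beta_k=\infty$, which cleanly sidesteps the mild circularity in your residual-term argument (where bounding $T_{k+m+1}$ presupposes the finiteness being proved).
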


\begin{proof}
The theorem follows from standard results on birth and death processes 
(see, e.g., \citet{KarlinTaylor1975}). The modified processes 
$\{b^i_n(t);W^c_{n,i}\ge t\ge 0\}$ hit the states $i=1,2,...,n-1$ in 
finite time with probability one, since
\begin{equation}
\sum^\infty_{m=i}\prod_{k=i+1}^{m+1}\frac{\alpha_k}{\beta_k}
=\frac{(4c)^{i-1}}{(i-1)!}\sum^\infty_{m=i}\frac{m!}{(4c)^m}=\infty,
\qquad i=1,2,...,n-1.
\end{equation}
From the Kolmogorov backward equation for the modified process
$\{b^i_n(t);W^c_{n,i}\ge t\ge 0\}$, which is (\ref{moment}) for 
$n=i+1,i+2,...$ with 
$\xi_n={\mathbb P}[b^i_n(t)=i]$ and the boundary condition 
$\xi_i=\delta(t)$, the expected first passage times 
satisfy a recursion for $i=1,2,...,n-1$
\begin{equation}
(\alpha_n+\beta_n)\zeta(n)-\alpha_n\zeta(n-1)-\beta_n\zeta(n+1)=1,
\qquad n=i+1,i+2,...,n-2
\end{equation}
with the boundary condition $\zeta(i)=0$, where 
$\zeta(n)={\mathbb E}[W^c_{n,i}]$. 
It is straightforward to solve the recursion and obtain
\begin{equation}
{\mathbb E}[W^c_{n,i}]=\sum_{m=i}^\infty\gamma_m+\sum_{j=i}^{n-2}
\prod_{k=i+1}^{j+1}\frac{\alpha_k}{\beta_k}\sum_{l=j+1}^\infty\gamma_l
=2\sum_{k=i}^{n-1}\sum_{j=0}^{\infty}\frac{(4c)^j}{(k)_{j+2}},
\qquad i=1,2,...,n-2,
\label{exp_Wnic1}
\end{equation}
and
\begin{equation}
{\mathbb E}[W^c_{n,n-1}]=\sum_{m=i}^\infty\gamma_m
=2\sum_{j=0}^{\infty}\frac{(4c)^j}{(k)_{j+2}},
\label{exp_Wnic2}
\end{equation}
where
\begin{equation*}
\gamma_{i}=\frac{1}{\alpha_{i+1}}=\frac{2}{i(i+1)},\qquad
\gamma_m=\frac{\beta_{i+1}\beta_{i+2}\cdots\beta_m}
{\alpha_{i+1}\alpha_{i+2}\cdots\alpha_m\alpha_{m+1}}
=\frac{2(4c)^{m-i}}{(i)_{m-i+2}},
\qquad m=i+1,i+2,...
\end{equation*}
It is clear from (\ref{exp_Wnic1}) and (\ref{exp_Wnic2}) that
\begin{equation}
{\mathbb E}[W^c_{n,i}]
>2\sum_{k=i}^{n-1}\frac{1}{k(k+1)}=2\left(\frac{1}{i}-\frac{1}{n}\right)
={\mathbb E}[W^0_{n,i}], \qquad i=1,2,...,n-1.
\label{ineq_wait}
\end{equation}
\end{proof}

As $c\rightarrow\infty$, for $i=1,2,...,n-1$,
\begin{equation}
{\mathbb E}[W^c_{n,i}]
=2\sum_{k=i}^{n-1}\frac{1}{k(k+1)}
\sum_{j=0}^{\infty}\frac{\left(\frac{4c}{k+2}\right)^j}
{\prod_{l=0}^{j-1}\left(1+\frac{l}{k+2}\right)}
>2\sum_{k=i}^{n-1}\frac{e^{\frac{4c}{k+2}}}
{k(k+1)}\rightarrow\infty.
\end{equation}

\begin{corollary}\label{corollary:passage}
For the whole population ($n=\infty$), the expected first passage times are
\begin{equation}
{\mathbb E}[W^c_{\infty,i}]
=2\sum_{j=0}^{\infty}\frac{(4c)^j}{(j+1)(i)_{j+1}},
\qquad
i=1,2,...
\end{equation}
\end{corollary}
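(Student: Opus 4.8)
The plan is to derive the corollary directly from the formula
\begin{equation*}
{\mathbb E}[W^c_{n,i}]=2\sum_{k=i}^{n-1}\sum_{j=0}^{\infty}\frac{(4c)^j}{(k)_{j+2}}
\end{equation*}
established in Theorem~\ref{theorem:passage}, by letting $n\to\infty$ and then evaluating the resulting inner sum over $k$ in closed form via a telescoping identity. The whole-population first passage time $W^c_{\infty,i}$ is to be understood as the appropriate limit of the finite-sample quantities.

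First I would justify the passage to the limit. Coupling the modified processes $\{b^i_m(t)\}$ monotonically, so that the path started from $m+1$ dominates the path started from $m$ until both are absorbed at $i$, one sees that $W^c_{m,i}$ is nondecreasing in $m$ and that $W^c_{\infty,i}=\lim_{m\to\infty}W^c_{m,i}$ in this coupling. Since the death rates are quadratic while the birth rates are only linear, the whole-population ancestral process comes down from infinity in finite time, so this limit is almost surely finite; monotone convergence then gives
\begin{equation*}
{\mathbb E}[W^c_{\infty,i}]=\lim_{n\to\infty}{\mathbb E}[W^c_{n,i}]=2\sum_{k=i}^{\infty}\sum_{j=0}^{\infty}\frac{(4c)^j}{(k)_{j+2}}.
\end{equation*}
Alternatively one may take this double series as the definition of ${\mathbb E}[W^c_{\infty,i}]$ and verify finiteness, noting that $(i)_{j+1}$ dominates any geometric factor so the series converges for every $c$.

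Because all terms are nonnegative, Tonelli's theorem permits interchanging the order of summation, reducing the problem to evaluating $\sum_{k=i}^{\infty}1/(k)_{j+2}$ for each fixed $j$. Writing $(k)_{j+2}=k(k+1)\cdots(k+j+1)$ as a product of $j+2$ consecutive integers, the partial-fraction identity
\begin{equation*}
\frac{1}{(k)_{j+2}}=\frac{1}{j+1}\left\{\frac{1}{(k)_{j+1}}-\frac{1}{(k+1)_{j+1}}\right\}
\end{equation*}
telescopes the sum over $k$. As $1/(k)_{j+1}\to 0$ when $k\to\infty$, only the boundary term at $k=i$ survives, yielding $\sum_{k=i}^{\infty}1/(k)_{j+2}=1/\{(j+1)(i)_{j+1}\}$. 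Substituting this back produces
\begin{equation*}
{\mathbb E}[W^c_{\infty,i}]=2\sum_{j=0}^{\infty}\frac{(4c)^j}{(j+1)(i)_{j+1}},
\end{equation*}
which is the asserted identity.

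The telescoping computation and the interchange of summation are entirely routine given the positivity of all terms, so the only step genuinely requiring care is the passage to the limit $n\to\infty$: one must confirm that the expected first passage times of the finite-sample processes converge to that of the whole-population process rather than diverging. The monotone-coupling argument above, combined with the finiteness of the limiting series, settles this point, and the rest is a direct calculation.
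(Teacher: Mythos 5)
Your proposal is correct and follows essentially the same route as the paper: the paper's proof likewise takes $n\to\infty$ in the formula of Theorem~\ref{theorem:passage} and invokes exactly the identity $\sum_{k=i}^{\infty}1/(k)_{j+2}=1/\{(j+1)(i)_{j+1}\}$, which you rederive by partial-fraction telescoping. Your additional justifications (the monotone coupling for the passage to the limit and Tonelli for interchanging the sums) merely make explicit steps the paper treats as immediate.
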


\begin{proof}
It follows immediately from an identity
\begin{equation}
\sum_{k=i}^\infty\frac{1}{(k)_{j+2}}=\frac{1}{(j+1)(i)_{j+1}},
\qquad
j=0,1,...
\end{equation}
\end{proof}

It is straightforward to obtain higher moments of the first passage times of 
the ancestral process $\{b_n(t);t\ge 0\}$ at the states $1,2,...,n-1$
in the same manner. The second moments ${\mathbb E}[(W^c_{n,i})^2]$ 
satisfy a recursion
\begin{equation}
(\alpha_n+\beta_n)\zeta(n)
-\alpha_n\zeta(n-1)-\beta_n\zeta(n+1)
=2{\mathbb E}[W^c_{n,i}],\qquad n=i+1,i+2,...,n-2
\end{equation}
with the boundary condition $\zeta(i)=0$, where
$\zeta(n)={\mathbb E}[(W^c_{n,i})^2]$.
However, there is no simple form for the density as in (\ref{neu_w}). 
The Laplace transform of the first passage times of the ancestral 
process satisfy a recursion for $i=1,2,...,n-1$
\begin{equation}
(\lambda+\alpha_n+\beta_n)\zeta(n)-\alpha_n\zeta(n-1)-\beta_n\zeta(n+1)=0, 
\qquad n=i+1,i+2,...,n-2
\end{equation}
with the boundary condition $\zeta(i)=1$, where 
$\zeta(n)={\mathbb E}[e^{-\lambda W^c_{n,i}}]$. 

The joint probability generating function of $b^1_n(t)$ and $r_n(t)$
satisfies a system of differential equation (\ref{killing}) with
$\xi_n={\mathbb E}[q^{b^1_n(t)}v^{r_n(t)}]$. By taking $t=\infty$, we have 
\begin{equation}
0=-(\alpha_n+\beta_n)\xi_n+\alpha_n\xi_{n-1}+v\beta_n\xi_{n+1},
\qquad n=1,2,..,
\end{equation}
with the boundary condition $\xi_1=1$, where 
$\xi_n={\mathbb E}[v^{r_n(\infty)}]$. The form of the probability 
generating function of $r(\infty)$ is
\begin{equation}
{\mathbb E}[v^{r_n(\infty)}]=
{\mathbb E}
\left[
\exp\left\{-2c(1-v)\int_0^{W^{vc}_{n,1}}\tilde{b}_n^1(u)du\right\}
\right],
\label{pgf_r}
\end{equation}
while the explicit form of the probability generating function is given by 
Theorem 5.1 in \citet{EthierGriffiths1990}, where $\rho$ is replaced by 
$4c$, and we have
\begin{equation}
{\mathbb E}[s^{r_n(\infty)}]=\frac{R_n(v)}{R_1(v)},
\end{equation}
where
\begin{eqnarray*}
R_n(v)&=&\int_0^1x^{4c(1-v)-1}(1-x)^{n-1}e^{-4cv(1-x)}dx
\nonumber\\
&=&\frac{(n-1)!}{(4c(1-v))_{n}}{}_1F_1(n;4c(1-v)+n;-4cv)
\nonumber\\
&=&\sum_{i=0}^\infty
\frac{(n+i-1)!(-4cv)^i}{(4c(1-v)+n)_{n+i}i!}.
\end{eqnarray*}
(\ref{pgf_r}) provides a way to compute the expectation of 
the total length of the edges in the ancestral selection graph
in the time interval $(0,W_{n,1}^c)$, and we have
\begin{equation}
{\mathbb E}
\left[
\int_0^{W^c_{n,1}}b_n^1(u)du
\right]
=\frac{1}{2c}{\mathbb E}[r_n(\infty)]
=\sum_{k=1}^{\infty}(4c)^{k-1}\sum_{m=1}^{n-1}\frac{1}{(m)_k}.
\end{equation}

It is possible to obtain the probability that the modified process
$\{b^1_n(t);W^c_{n,1}\ge t\ge 1\}$ hits the states $n+1,n+2,...$

\begin{theorem}
Let $z(1)=0, z(2)=1$, and
\begin{equation}
z(j)=1+\frac{\alpha_2}{\beta_2}
+\frac{\alpha_2\alpha_3}{\alpha_2\beta_3}+
\cdots
+\frac{\alpha_2\alpha_3\cdots\alpha_{j-1}}{\beta_2\beta_3\cdots\beta_{j-1}}
=\sum_{k=0}^{j-2}\frac{k!}{(4c)^k},
\qquad j=3,4,...
\end{equation}
Then, the probability that the modified process
$\{b^1_n(t);W^c_{n,1}\ge t\ge 1\}$ hits the states $m=n+1,n+2,...$ is
\begin{equation}
{\mathbb P}[W^c_{n,1}>W^c_{n,m}]=\frac{z(n)}{z(m)}.
\end{equation}
\end{theorem}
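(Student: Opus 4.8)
The plan is to recognise this as a two-boundary gambler's-ruin problem for the birth and death chain underlying $\{b^1_n(t);W^c_{n,1}\ge t\ge 0\}$, with the absorbing state $1$ as the lower boundary and $m$ as the upper boundary. The key structural observation is that $\{b_n(t);t\ge 0\}$ moves only by unit steps, up at the branching rate $\beta_i$ and down at the coalescence rate $\alpha_i$. Hence, starting from $n<m$, the process cannot reach any state above $m$ without first passing through $m$, so the event $\{W^c_{n,1}>W^c_{n,m}\}$ depends only on the finite window of states $\{1,2,\dots,m\}$, and $m$ may be treated as an absorbing boundary. This disposes of the complication, noted in Section 2, that the states $n+1,n+2,\dots$ carry positive probability.

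First I would introduce the hitting probability $h(i):=\mathbb{P}[W^c_{i,1}>W^c_{i,m}]$ of reaching $m$ before $1$ when the chain starts at $i$. Conditioning on the first jump out of state $i$, which goes down with probability $\alpha_i/(\alpha_i+\beta_i)$ and up with probability $\beta_i/(\alpha_i+\beta_i)$, shows that $h$ satisfies the harmonic equation
\begin{equation*}
(\alpha_i+\beta_i)h(i)=\alpha_i h(i-1)+\beta_i h(i+1),\qquad 1<i<m,
\end{equation*}
with the boundary values $h(1)=0$ and $h(m)=1$.

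Next I would pass to the first differences $d_i:=h(i)-h(i-1)$. Rewriting the harmonic equation as $\beta_i\{h(i+1)-h(i)\}=\alpha_i\{h(i)-h(i-1)\}$ gives the one-step recursion $d_{i+1}=(\alpha_i/\beta_i)d_i$, whose solution is $d_i=d_2\prod_{k=2}^{i-1}(\alpha_k/\beta_k)$ for $2\le i\le m$. Summing the differences and using $h(1)=0$ yields
\begin{equation*}
h(j)=\sum_{i=2}^{j}d_i=d_2\sum_{i=2}^{j}\prod_{k=2}^{i-1}\frac{\alpha_k}{\beta_k}=d_2\,z(j),
\end{equation*}
since the last sum is exactly the defining expression for $z(j)$ (with the empty product giving the leading $1$ and $z(1)=0$, $z(2)=1$). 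The boundary condition $h(m)=1$ then fixes $d_2=1/z(m)$, and therefore $h(n)=z(n)/z(m)$, which is the asserted identity. To confirm the closed form of $z$ I would substitute $\alpha_k/\beta_k=(k-1)/(4c)$, giving $\prod_{k=2}^{i-1}(\alpha_k/\beta_k)=(i-2)!/(4c)^{i-2}$ and hence $z(j)=\sum_{k=0}^{j-2}k!/(4c)^{k}$.

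There is no serious obstacle here: the argument is the standard scale-function computation for birth and death chains. The only point requiring care is the infinite state space above $m$, which is handled by the unit-step observation made at the outset; once $m$ is treated as an absorbing boundary, the problem reduces to a finite gambler's ruin and the difference recursion closes by an elementary telescoping.
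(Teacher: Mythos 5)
Your proof is correct, but it reaches the identity by a different route than the paper. The paper's proof is a martingale argument: it observes that $z(b^1_n(t))$ is a martingale for the modified process (i.e.\ $z$ is harmonic for the generator, $(\alpha_i+\beta_i)z(i)=\alpha_i z(i-1)+\beta_i z(i+1)$), notes that $\min\{W^c_{n,1},W^c_{n,m}\}$ is a Markov time, and applies the optional sampling theorem to get $z(n)={\mathbb P}[W^c_{n,1}>W^c_{n,m}]\,z(m)$ directly, citing Karlin and Taylor. You instead set up the Dirichlet (gambler's-ruin) problem for the hitting probability $h$ via first-step analysis and solve it by telescoping the difference recursion $d_{i+1}=(\alpha_i/\beta_i)d_i$, with boundary conditions $h(1)=0$, $h(m)=1$ fixing the constants. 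The two arguments pivot on exactly the same harmonicity relation --- your rewriting $\beta_i\{h(i+1)-h(i)\}=\alpha_i\{h(i)-h(i-1)\}$ is the martingale property of $z$ in difference form --- but your version is more elementary and constructive: it derives $z$ as the scale function of the chain rather than verifying a formula handed down in the statement, and it avoids invoking optional sampling, whose justification (boundedness of the stopped process and a.s.\ finiteness of the stopping time) the paper leaves implicit. Your opening observation that unit steps confine the process to $\{1,\dots,m\}$ before the stopping time is precisely the fact the paper's optional-sampling step silently relies on, so making it explicit is a genuine improvement; it also correctly dismisses the worry about positive stationary mass at states above $n$. The closed-form check $\alpha_k/\beta_k=(k-1)/(4c)$, giving $z(j)=\sum_{k=0}^{j-2}k!/(4c)^k$, matches the statement (and incidentally exposes a typo in the paper's displayed middle term, where $\alpha_2\beta_3$ should read $\beta_2\beta_3$ in the denominator). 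The trade-off: the paper's martingale formulation is shorter and dovetails with the martingale machinery it uses elsewhere (the bounds of Theorem 3.2), while yours stands alone with no appeal to stopping theorems.
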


\begin{proof}
The theorem follows from standard results on birth and death processes
(see, \citet{KarlinTaylor1975}, pp. 323). It is straightforward to show that 
$z(b^1_n(t))$ is a martingale for the modified process. 
$\min\{W^c_{n,1},W^c_{n,m}\}$ is a Markov time with respect to 
the modified process. We apply the optional sampling theorem to conclude that
\begin{equation}
z(n)={\mathbb E}[b^1_n(\min\{W^c_{n,1},W^c_{n,m}\})]=
{\mathbb P}[W^c_{n,1}>W^c_{n,m}]z(m),
\qquad m=n+1,n+2,...
\end{equation}
\end{proof}

\begin{remark}
For small $c$, ${\mathbb P}[W^c_{n,1}<W^c_{n,m}]$ can be expanded into 
a power series in $c$.
\begin{equation}
{\mathbb P}[W^c_{n,1}>W^c_{n,m}]=\frac{(4c)^{m-n}}{[m-2]_{m-n}}
+O(c^{m-n+1}),
\qquad m=n+1,n+2,...
\end{equation}
\end{remark}

Figure 4 shows the hitting probability for $n=10$ and 50 as a function
of $c$. It can be seen that the hitting probability grows quickly around 
a critical value of $c$. For $n=50$ and $m=51$, the probability grows 
linearly until $c$ is smaller than $4.5$ (See Remark 4.4). Then it approaches 
1 quickly.

\section{time to fixation}

In studying evolutionary processes from the standpoint of population genetics,
the probability and the time to fixation of a mutant gene play important 
roles. The expected time to fixation of a mutant gene conditional on fixation 
was obtained by \citet{KimuraOhta1969}. Furthermore, \citet{Ewens1973} and 
\cite{MaruyamaKimura1974} showed that the expected length of time which it 
takes for an allele to increase frequency from $q$ to $y~(>q)$ on the way to 
fixation is equal to the expected length of time which the same allele takes 
when its frequency decrease from $y$ to $q$ on the way to extinction. 
The time-reversibility property is equivalent to the property that the density
of the expected sojourn time does not depend on the sign of the selection 
coefficient, which was shown by \citet{Maruyama1972}. While these results are 
well known, their interpretation in terms of the ancestral process of 
the whole population $\{b_\infty(t);t\ge 0\}$ are interesting.

The fixation probability was obtained by solving the Kolmogorov backward 
equation for the Wright-Fisher diffusion with directional selection
$\{x_p(t);t\ge 0\}$ (\citet{Kimura1957}). The fixation probability of 
the allele $A_1$ is
\begin{equation}
u_1(p)=\frac{1-e^{-4c p}}{1-e^{-4c}},
\end{equation}
and the fixation probability of the allele $A_2$ is $1-u_1(p)$.
It follows from (\ref{fix_A2}) that
\begin{equation}
1-u_1(p)=
2(1-r^2)e^{c(r-1)}
\sum_{k=0}^{\infty}\frac{V_{1k}^{(1)}(c,r)V_{1k}^{(1)}(c,1)}{2\lambda_k}.
\label{id_fix}
\end{equation}
It is possible to obtain the fixation probability from the stationary 
measure of the ancestral process (\ref{stationary}). If the allele $A_2$ 
fixes in a population, the ancestral particles of the whole population
in infinite time backwards consist of type $A_2$ particles only, and
we have
\begin{equation}
{\mathbb E}[q^{b_\infty(\infty)}]=\sum_{i=1}^{\infty}\pi_iq^i
=\frac{e^{4c q}-1}{e^{4c}-1}=1-u_1(p).
\end{equation}
The relationship between the fixation probability and the probability 
generating function of the stationary measure is a special case of 
Theorem 2 (f) in \citet{AS2005}.

The density of time to fixation of the allele $A_2$ conditional on 
fixation has a genealogical interpretation. Let
\begin{equation}
T^c_0:=\inf\{t\ge 0; y_q(t)=1\}.
\end{equation}
Then, it follows from the expression
\begin{equation}
{\mathbb P}[T^c_0<t|T^c_0<\infty]
=\frac{{\mathbb E}[q^{b_\infty(t)}]}{1-u_1(p)}
=\frac{\sum_{i=1}^\infty{\mathbb P}[b_\infty(t)=i]q^i}
{\sum_{i=1}^\infty\pi_iq^i}
\end{equation}
that the process of fixation of the allele in a diffusion model, in which
the left hand side converges to one as $t\rightarrow\infty$, corresponds to
convergence of the distribution of the ancestral process
${\mathbb P}[b_\infty(t)=i]$ to its stationary measure $\pi_i$ as 
$t\rightarrow\infty$. 

The expected time to fixation of the allele $A_2$ conditional on 
fixation was obtained 
by solving the Kolmogorov backward equation 
(\citet{KimuraOhta1969}, \citet{Maruyama1972}), and 
\begin{equation}
{\mathbb E}[T^c_0|T^c_0<\infty]=\int_0^1 \Phi(q,y)dy,
\end{equation}
where $\Phi(q,y)$ is the density of the expected sojourn time of the allele
$A_2$ at frequency $y$ in the path starting from frequency $q$ and 
going to fixation, and
\begin{eqnarray}
\Phi(q,y)&=&\frac{S(y)S(1-y)}{2cy(1-y)S(1)},\qquad y>q,\nonumber\\
&=&\frac{S(y)}{2cy(1-y)}
\left\{
\frac{S(1-y)}{S(1)}-\frac{S(q-y)}{S(q)}
\right\},
\qquad y<q,
\end{eqnarray}
and $S(y)=\exp(4cy)-1$. Then,
\begin{equation}
{\mathbb E}[T^c_0|T^c_0<\infty]=
\int_0^1\frac{S(y)S(1-y)}{2cy(1-y)S(1)}dy 
-\int_0^q\frac{S(y)S(q-y)}{2cy(1-y)S(q)}dy,
\label{exp_T0_diff}
\end{equation}
where
\begin{eqnarray*}
\int_0^1\frac{S(y)S(1-y)}{2cy(1-y)S(1)}dy
&=&\frac{\pi_1}{8c^2}\sum_{i=1}^\infty\sum_{j=1}^\infty
\frac{(4c)^{i+j}}{i!j!}
\int_0^1y^{i-1}(1-y)^{j-1}dy
\nonumber\\
&=&\frac{\pi_1}{2c}\sum_{k=1}^\infty\frac{(4c)^{k}}{(k+1)!}
\sum_{i=1}^k\frac{1}{i(k-i+1)}
\nonumber\\
&=&4\pi_1\sum_{k=0}^\infty\frac{H_{k+1}(4c)^{k}}{(k+2)!},
\end{eqnarray*}
$H_k=1+1/2+\cdots+1/k$,
and
\begin{eqnarray*}
\int_0^q\frac{S(y)S(q-y)}{2cy(1-y)S(q)}dy
&=&\frac{1}{2cS(q)}\sum_{i=1}^\infty\sum_{j=1}^\infty
\frac{(4c)^{i+j}}{i!j!}
\int_0^q\frac{y^{i-1}(q-y)^j}{1-y}dy
\nonumber\\
&=&\frac{1}{2cS(q)}\sum_{i=1}^\infty\sum_{j=1}^\infty
\frac{(4cq)^{i+j}}{i(i+j)!}
{}_2F_1(1,i,i+j+1;q)
\nonumber\\
&=&\frac{1}{2cS(q)}
\sum_{i=1}^\infty\sum_{j=1}^\infty
\frac{(4cq)^{i+j}}{i(i+j)!}
\sum_{k=0}^\infty\frac{(i)_kq^k}{(i+j+1)_k}.
\end{eqnarray*}
It is possible to obtain the expected time to fixation of the allele $A_2$ 
conditional on fixation from the distribution $b_\infty(t)$, and we have
\begin{eqnarray}
{\mathbb E}[T_0^c|T_0^c<\infty]
&=&\frac{1}{\sum_{i=1}^\infty\pi_iq^i}\int_0^\infty t\frac{d}{dt}
\left[
\sum_{i=1}^\infty
\{{\mathbb P}[b_\infty(t)=i]-\pi_i\}q^i
\right]dt
\nonumber\\
&=&
\frac{1}{\sum_{i=1}^\infty\pi_iq^i}
\sum_{i=1}^\infty
q^i\int_0^\infty
\{{\mathbb P}[b_\infty(t)=i]-\pi_i\}
dt.
\label{exp_T0_anc}
\end{eqnarray}
From the two expressions (\ref{exp_T0_diff}) and (\ref{exp_T0_anc}),
an identity at $q=0$ follows immediately.
\begin{equation}
\sum_{k=0}^\infty
\frac{V^{(1)}_{1k}(c,-1)V^{(1)}_{1k}(c,1)}{\lambda_k^2N_{1k}}
=e^{2c}\pi_1^2\sum_{k=0}^\infty\frac{H_{k+1}(4c)^{k}}{(k+2)!}.
\label{id_T0}
\end{equation}
It is straightforward to obtain similar identities by comparing
(\ref{exp_T0_diff}) and (\ref{exp_T0_anc}) in each power of $q$. 
Moreover, explicit expression for the higher moments of the time 
to fixation, conditional on fixation, are available
\citep{Maruyama1972}, and they produce similar identities.

The density of the time to fixation of a single mutant gene conditional
on fixation has interesting properties. Let
\begin{equation}
T^c_1:=\inf\{t\ge 0; x_p(t)=1\}.
\end{equation}
Then, from a time-reversibility argument on the conditional diffusion 
process (\citet{Ewens1973}, \citet{MaruyamaKimura1974}), we have
\begin{equation}
\lim_{q\rightarrow 0}{\mathbb P}[T^c_0<t|T^c_0<\infty]
=\lim_{p\rightarrow 0}{\mathbb P}[T^c_1<t|T^c_1<\infty]
=\frac{{\mathbb P}[b_{\infty}(t)=1]}{\pi_1}.
\label{reverse}
\end{equation}
The same density holds for a mutant gene of allele $A_1$ and a mutant 
gene of allele $A_2$. This property has an intuitive genealogical 
interpretation. The conditional density is given by the probability of 
the whole population being descended from a single real ancestral particle. 
Since there is no variation in the population, selection cannot have 
an effect on it and consequently, the conditional density should not depend 
on the allelic type. 

Theorem \ref{theorem:bound} gives bounds for the density of the time to 
fixation of a single mutant gene conditional on the fixation, and 
\begin{equation}
\frac{1}{\pi_1}-\frac{\eta(\infty;c)e^{-\lambda_0 t}+2c}{\pi_1\lambda_0}\le
\lim_{q\rightarrow 0}{\mathbb P}[T_0^c<t|T_0^c<\infty]
\le 
\frac{1}{\pi_1}-
\frac{\eta(\infty;c)e^{-\lambda_0 t}+2c}{\pi_1(\eta(\infty;c)+2c)}.
\label{bound_fix}
\end{equation}
The bounds are useless when $c$ is large, since the upper and lower bounds do 
not converge as $c\rightarrow\infty$. When $c$ is small, the lower bound is 
sharp. Figure 5a shows the bounds when $c=0.1$. Figure 5b gives 
the conditional density and the derivative of the lower bound 
${\pi_1}^{-1}\eta(\infty;c)\exp(-\lambda_0 t)$. It can be seen that the tail 
of the conditional density is well characterised by the largest eigenvalue.

It is worth noting that the identity (\ref{reverse}) gives the following 
identity in the distribution $b_n(t)$. Its interpretation in terms of 
the ancestral process $\{b_n(t);t\ge 0\}$ is unclear.

\begin{remark}\label{remark:reverse}
(\ref{reverse}) gives
\begin{equation}
{\mathbb P}[b_\infty(t)=1]=
\lim_{n\rightarrow\infty}e^{-4c}\sum_{k=1}^n(-1)^{k+1}{n\choose k}
{\mathbb E}[b_k(t)].
\end{equation}
\end{remark}

\begin{proof}
(\ref{reverse}) is equivalent to
\begin{equation}
\lim_{p\rightarrow 0}\frac{f(p,1;t)}{u_1(p)}
=\lim_{q\rightarrow 0}\frac{f(p,0;t)}{1-u_1(p)}
=\frac{{\mathbb P}[b_\infty(t)=1]}{\pi_1},
\end{equation}
where
\begin{eqnarray}
\lim_{p\rightarrow 0}\frac{f(p,1;t)}{u_1(p)}
&=&\lim_{p\rightarrow 0}\lim_{n\rightarrow\infty}
\frac{{\mathbb E}[x_p(t)^n]}{u_1(p)}
=\lim_{p\rightarrow 0}\lim_{n\rightarrow\infty}
\frac{{\mathbb E}[(1-y_q(t))^n]}{u_1(p)}
\nonumber\\
&=&\lim_{p\rightarrow 0}\lim_{n\rightarrow\infty}
\sum_{k=0}^n(-1)^k{n\choose k}
\frac{{\mathbb E}[q^{b_k(t)}]}{u_1(p)}
=\lim_{n\rightarrow\infty}
e^{-4c}\sum_{k=1}^n(-1)^{k+1}{n\choose k}
\frac{{\mathbb E}[b_k(t)]}{\pi_1}.
\end{eqnarray}
\end{proof}

In the neutral Wright-Fisher diffusion, the density of time to fixation of 
a mutant gene conditional on fixation follows
\begin{equation}
\lim_{q\rightarrow 0}{\mathbb P}[T^0_0<t|T^0_0<\infty]
={\mathbb P}[a_{\infty}(t)=1],
\label{neu_T}
\end{equation}
where $T^0_0$ is the time to fixation of a mutant gene in the neutral 
Wright-Fisher diffusion. From (\ref{exp_T0_diff}), the expected time to 
fixation of a mutant gene conditional on fixation has a simple form
\begin{equation}
\lim_{q\rightarrow 0}{\mathbb E}[T^c_0|T^c_0<\infty]
=4\pi_1\sum_{j=0}^\infty\frac{H_{j+1}(4c)^{j}}{(j+2)!}
<\lim_{q\rightarrow 0}{\mathbb E}[T^0_0|T^0_0<\infty]=2,
\label{exp_T0m}
\end{equation}
where the inequality holds from the following lemma:

\begin{lemma}\label{lemma:sojourn}
The density of expected sojourn time of the allele $A_2$ at frequency 
$y$ in the path starting from frequency $0$ and going to fixation 
satisfies 
\begin{equation}
\frac{S(y)S(1-y)}{2cy(1-y)S(1)}<2, \qquad 0<y<1.
\end{equation}
\end{lemma}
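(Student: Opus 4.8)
The plan is to reduce the stated inequality to a clean supermultiplicativity statement about the function $g(u)=(e^u-1)/u$ and then to recognize that statement as a positive-correlation inequality, which makes both the bound and its strictness transparent.

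First I would substitute $S(y)=e^{4cy}-1$ and clear denominators. Since $0<y<1$ and $c>0$ make $2cy(1-y)S(1)$ strictly positive, the claimed bound $<2$ is equivalent to
\[
S(y)\,S(1-y) < 4c\,y(1-y)\,S(1).
\]
Setting $s=4cy$ and $t=4c(1-y)$, so that $s,t>0$ and $s+t=4c$, and using $e^{u}-1=u\,g(u)$, the three factors become $S(y)=s\,g(s)$, $S(1-y)=t\,g(t)$, $S(1)=(s+t)\,g(s+t)$, while $4c\,y(1-y)=st/(s+t)$. Substituting, the right-hand side is $st\,g(s+t)$ and the left-hand side is $st\,g(s)\,g(t)$, so after dividing by $st>0$ the entire lemma collapses to
\[
g(s)\,g(t) < g(s+t),\qquad s,t>0.
\]

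The decisive observation is the representation $g(u)=\int_0^1 e^{ux}\,dx=\mathbb{E}[e^{uU}]$, the moment generating function of a random variable $U$ uniform on $[0,1]$. Taking independent copies $U,U'$ of $U$, we have $g(s)\,g(t)=\mathbb{E}[e^{sU}]\,\mathbb{E}[e^{tU'}]$ and $g(s+t)=\mathbb{E}[e^{sU}e^{tU}]$, so the inequality to prove is precisely the correlation statement $\mathbb{E}[e^{sU}e^{tU}]>\mathbb{E}[e^{sU}]\,\mathbb{E}[e^{tU}]$. I would make this explicit via the standard identity
\[
\mathbb{E}[e^{sU}e^{tU}]-\mathbb{E}[e^{sU}]\,\mathbb{E}[e^{tU}]
=\tfrac{1}{2}\int_0^1\!\!\int_0^1\bigl(e^{sx}-e^{sx'}\bigr)\bigl(e^{tx}-e^{tx'}\bigr)\,dx\,dx',
\]
whose integrand is nonnegative pointwise (both factors share the sign of $x-x'$ when $s,t>0$) and strictly positive off the diagonal, giving strict inequality.

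The only delicate point is strictness and the degenerate limit. For $c>0$ we have $s,t>0$, the maps $x\mapsto e^{sx}$ and $x\mapsto e^{tx}$ are genuinely strictly increasing on $[0,1]$, and the correlation integral above is strictly positive, so the bound $<2$ is strict for every $0<y<1$, exactly as the lemma asserts. As $c\to 0$ both exponentials degenerate to constants and the gap closes, consistent with the ratio tending to $2$ in that limit. I expect no real obstacle beyond carefully recording these reductions; as an alternative to the probabilistic route one could prove $g(s)\,g(t)<g(s+t)$ by comparing the power series $g(u)=\sum_{k\ge 0}u^k/(k+1)!$ coefficientwise, but the correlation argument delivers both the inequality and its strictness with minimal computation.
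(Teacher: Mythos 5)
Your proof is correct, and it takes a genuinely different route from the paper. The paper clears denominators in the same way, but then expands both sides in powers of $4c$ and compares the coefficient of $(4c)^i$ directly: it bounds $\sum_{j=0}^i \frac{y^j(1-y)^{i-j}}{(j+1)!(i-j+1)!}$ by $\frac{1}{(i+1)!}$ via the binomial theorem, which amounts to the term-by-term estimate $(j+1)(i-j+1)\ge i+1$. In substance this is exactly the ``coefficientwise'' alternative you mention in your last sentence: with $s=4cy$, $t=4c(1-y)$, the paper's coefficient comparison is precisely the power-series proof of $g(s)g(t)<g(s+t)$ for $g(u)=\sum_{k\ge 0}u^k/(k+1)!$. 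Your main argument instead packages the inequality conceptually, recognizing $g$ as the moment generating function of a uniform variable and invoking the Chebyshev correlation identity for the comonotone pair $e^{sU}$, $e^{tU}$. This buys two things. First, strictness is immediate from positivity of the integrand off the diagonal; by contrast, the paper's displayed term-by-term inequality is actually an equality for $i=0$ and $i=1$ (and at the boundary terms $j=0$ and $j=i$ generally), so its strictness bookkeeping is slightly loose, though harmless since strictness at $i=2$ already forces strictness of the sum. Second, your formulation makes the constant $2$ transparent as the degenerate limit $c\to 0$, where $g\to 1$ and the ratio tends to the neutral sojourn density. The paper's computation is more elementary and self-contained; yours is shorter, structurally cleaner, and generalizes (the same correlation argument gives $g(s)g(t)<g(s+t)$ for the MGF of any nondegenerate distribution).
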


\begin{proof}
The inequality is equivalent to
\begin{equation}
\frac{e^{4cy}-1}{y}\frac{e^{4c(1-y)}-1}{1-y}<4c(e^{4c}-1),
\end{equation}
or
\begin{equation}
\sum_{i=0}^{\infty}\sum_{j=0}^i\frac{(4c)^iy^j(1-y)^{i-j}}{(j+1)!(i-j+1)!}
<\sum_{i=0}^{\infty}\frac{(4c)^i}{(i+1)!}.
\end{equation}
The inequality follows from an inequality
\begin{equation}
\sum_{j=0}^i
\frac{y^j(1-y)^{i-j}}{(j+1)!(i-j+1)!}
<\frac{1}{(i+1)!}\sum_{j=0}^i
\frac{i!y^j(1-y)^{i-j}}{j!(i-j)!}=\frac{1}{(i+1)!},\qquad i=0,1,...
\end{equation}
\end{proof}

As $c$ becomes large, ${\mathbb P}[b_\infty(t)=1]$ decreases, while 
the expected fixation time of a mutant gene conditional on fixation 
decreases. It is straightforward to show that the inequality for 
the expected fixation time (\ref{exp_T0m}) is equivalent to an inequality
\begin{equation}
\int_0^{\infty}
\left\{
\frac{{\mathbb P}[b_\infty(t)=1]}{\pi_1}-
{\mathbb P}[a_\infty(t)=1]\right\}dt>0.
\end{equation}
In the neutral process, both the density of the waiting time until
the ancestral process hits the state 1 and the density of the conditional 
fixation time are given by the probability that the ancestral process is 
at the state 1 (\ref{neu_w},\ref{neu_T}). It follows that
\begin{equation}
{\mathbb E}[W_{\infty,1}^0]=\lim_{q\rightarrow 0}
{\mathbb E}[T^0_0|T^0_0<\infty]
=\int^\infty_0\{{\mathbb P}[a_\infty(t)=1]-1\}dt=2.
\end{equation}
In contrast, in the processes with directional selection, we have
\begin{equation}
{\mathbb E}[W_{\infty,1}^c]=2\sum_{j=0}^{\infty}\frac{(4c)^j}{(j+1)(j+1)!}>2,
\end{equation}
while
\begin{equation}
\lim_{q\rightarrow 0}{\mathbb E}[T^c_0|T^c_0<\infty]
=\int^\infty_0\left\{\frac{{\mathbb P}[b_\infty(t)=1]}{\pi_1}-1\right\}dt
=4\pi_1\sum_{j=0}^\infty\frac{H_{j+1}(4c)^{j}}{(j+2)!}<2.
\end{equation}


\section{discussion}

In this article, the ancestral process $\{b_n(t);t\ge 0\}$, specifying
the number of branches in the ancestral selection graph, was investigated
by exploiting the moment duality between this process and the Wright-Fisher
diffusion with directional selection. An explicit formula for the probability
distribution of $b_n(t)$ was derived. Although this expression cannot be 
given in closed form, since it involves eigenvalues and coefficients which
are determined by an intractable three-term recursion relation, it is
possible to expand the probability distribution as a perturbation series 
in $2c$. This expression is given in a closed form for each order of 
the perturbation and is accurate when $c$ is small. Bounds for the probability
that $b_n(t)$ is at the state 1 is obtained. When $c$ is small, one of
the bounds is sharp. The density of time to fixation of a single mutant gene 
conditional on fixation was shown to be given by the probability of the whole 
population being descended from a single real particle, regardless of 
the allelic type. Thus, the bounds for the probability that $b_n(t)$ is
at the state 1 give bounds for the density of the conditional hitting time.
It was shown that the tail of the conditional hitting time is well 
characterised by the largest eigenvalue. The probability that the process 
hits states larger than the initial state before the process hits the state 1 
was obtained. According to the formula, the number of branches in 
the ancestral selection graph grows rapidly when $c$ is larger than a critical
value. One of the difficulties of simulating the ancestral selection graph 
is keeping track of large numbers of branches when $c$ is large. Specifically,
when $c$ is larger than the critical value, enormous number of branches 
emerge and it will be difficult to simulate the ancestral selection graph.

If a sample consists only of type $A_2$ particles, the probability 
distribution of the ancestral particles, all of which are $A_2$, is $b_n(t)$ 
(see Theorem 2.1). If a sample contains type $A_1$ particles, the joint 
probability distribution of the number of the $A_1$ particles and the number 
of the $A_2$ particles is interesting. However, it seems that the expression
of the moments in the Wright-Fisher diffusion (\ref{dual_xy}) does not give
any insights into the joint probability distribution, except in the case when 
a sample consists of a single $A_1$ particle. The time-reversibility argument 
of the conditional diffusion process gives an identity, whose interpretation 
in terms of the ancestral process is unclear (see Remark 5.1). 
The interpretation of the time-reversibility in terms of the ancestral process 
needs further investigation.

In this article, the fixation process of a mutant gene in the Wright-Fisher
diffusion with directional selection, which has been well studied, was 
interpreted in terms of the convergence of the ancestral process to its 
stationary measure. This approach might be useful for other population 
genetical models for which less is known about the fixation process. Examples 
include diffusion processes with frequency-dependent selection, with multiple 
selected alleles, and of spatially-structured populations \citep{AS2005}. 

\begin{appendix}

\section{The oblate spheroidal wave function}

The oblate spheroidal wave function $V^{(1)}_{1k}(c,z)$ can be represented
by expansions of the form (\citet{Stratton1941})
\begin{equation}
V^{(1)}_{1k}(c,z)=\sum_{l\ge 0}{}'f^k_l(c) T^1_l(z),
\qquad k=0,1,...
\end{equation}
This notation was used in \citet{Kimura1955}. It was denoted by 
$V^{(1)}_{1k}(-ic,z)$ in \citet{Stratton1941} and 
$(1-z^2)^{\frac{1}{2}}S_{1k+1}(c,z)$ in \citet{Flammer1957}.
From the orthogonal properties of the Gegenbauer function it is shown that
\begin{equation}
\int^1_{-1}(1-z^2)V^{(1)}_{1k}(c,z)V^{(1)}_{1l}(c,z)dz
=\delta_{k,l}N_{1k},
\end{equation}
where 
\begin{equation*}
N_{1k}=2\sum_{l\ge 0}{}'\frac{(l+1)(l+2)}{(2l+3)}(f^k_l(c))^2.
\end{equation*}
Note that
\begin{equation}
V_{1k}^{(1)}(c,1)
=\frac{1}{2}\sum_{l\ge 0}{}'(l+1)(l+2)f^k_l(c),\qquad
V_{1k}^{(1)}(c,-1)=(-1)^kV_{1k}^{(1)}(c,1).
\end{equation}
The coefficients $f^k_l(c)$ satisfy a three-term recursion in the form
\begin{equation}
A_{l+2}f_{l+2}^k(c)+B_{l}f_l^k(c)+C_{l-2}f_{l-2}^k(c)=0,
\label{rec_f}
\end{equation}
where
\begin{eqnarray*}
A_{l}=-\frac{(l+1)(l+2)}{(2l+1)(2l+3)},\qquad
B_l=\frac{l(l+3)-b_k}{c^2}-\frac{2l^2+6l+1}{(2l+1)(2l+5)},\qquad
C_{l}=-\frac{(l+1)(l+2)}{(2l+3)(2l+5)},
\end{eqnarray*}
and $b_k=2\lambda_k-2-c^2$. $f^k_l(c)=0$ for odd $l$ if $k$ is even and 
for even $l$ if $k$ is odd. (\ref{rec_f}) can be developed as a continued 
fraction.
\begin{eqnarray}
\frac{f_l^k}{f_{l+2}^k}&=&
-\frac{A_{l+2}}{B_l-}\frac{C_{l-2}A_l}{B_{l-2}-}\cdots
\frac{C_{2}A_{4}}{B_{2}-}\frac{A_2}{B_0}
\qquad l=0,2,...
\nonumber\\
&&-\frac{A_{l+2}}{B_l-}\frac{C_{l-2}A_l}{B_{l-2}-}\cdots
\frac{C_{3}A_{5}}{B_{3}-}\frac{A_3}{B_1}
\qquad l=1,3,...
\label{confra1}
\end{eqnarray}
and
\begin{equation}
\frac{f_{l+2}^k}{f_{l}^k}=
-\frac{C_{l}}{B_{l+2}-}\frac{A_{l+4}C_{l+2}}{B_{l+4}-}
\cdots,\qquad l=0,1,..
\label{confra2}
\end{equation}
$b_k$ is determined by the condition that the reciprocal of the ratio 
$f_{l}/f_{l+2}$ by (\ref{confra1}) must equal the value of $f_{l+2}/f_{l}$ 
obtained from (\ref{confra2}). Then, the continued fractions provide a way 
to compute arbitrary coefficient. 

For small $c$, the eigenvalue can be expanded into a power series in $c$.
\begin{equation}
\lambda_k=\frac{(k+1)(k+2)}{2}+\frac{(k+1)(k+2)}{(2k+1)(2k+5)}c^2+O(c^4).
\end{equation}
If we set $f^k_k(c)=1$, then
\begin{equation}
f^k_{k+2}(c)=
\frac{(k+1)(k+2)}{2(2k+3)(2k+5)^2}c^2+O(c^4),\qquad
f^k_{k-2}(c)=
-\frac{(k+1)(k+2)}{2(2k+1)^2(2k+3)}c^2+O(c^4),
\end{equation}
and other coefficients are zero up to $O(c^4)$. 

\end{appendix}

\vspace{0.5cm}

{\bf Acknowledgments.} The author is grateful for useful discussion with 
Jotun Hein, his group, Robert C. Griffiths and Jay E. Taylor. He is also 
grateful for valuable comments and suggestions by anonymous referees.

\newpage

\noindent
Figure 1. A realization of the ancestral selection graph embedded in 
a diagram of a sample path of $x_p(t)$.

\noindent
Figure 2. $\eta(\infty;c)$ (dots) and $3\exp(-c)$ (line).

\noindent
Figure 3. $1-{\mathbb P}[b_{10}(t)=1]$ (line) and the upper and the lower 
bounds given by Theorem \ref{theorem:bound} (dotted lines).

\noindent
Figure 4. The probability that the ancestral process $\{b_n(t);t\ge 0\}$
hits states $m(>n)$ before the process hits state 1 (ultimate ancestor).

\noindent
Figure 5. (a) The cumulative probability of the density of time to fixation of
a single mutant gene conditional on the fixation (\ref{reverse}) (line) and
the upper and the lower bounds given by (\ref{bound_fix}) (dotted lines).
(b) The density of the conditional fixation time (line) and 
$\pi_1^{-1}\eta(\infty;c)\exp(-\lambda_0 t)$ (dotted line).

\newpage

\begin{figure}[!hbt]
\includegraphics[width=.9\textwidth]{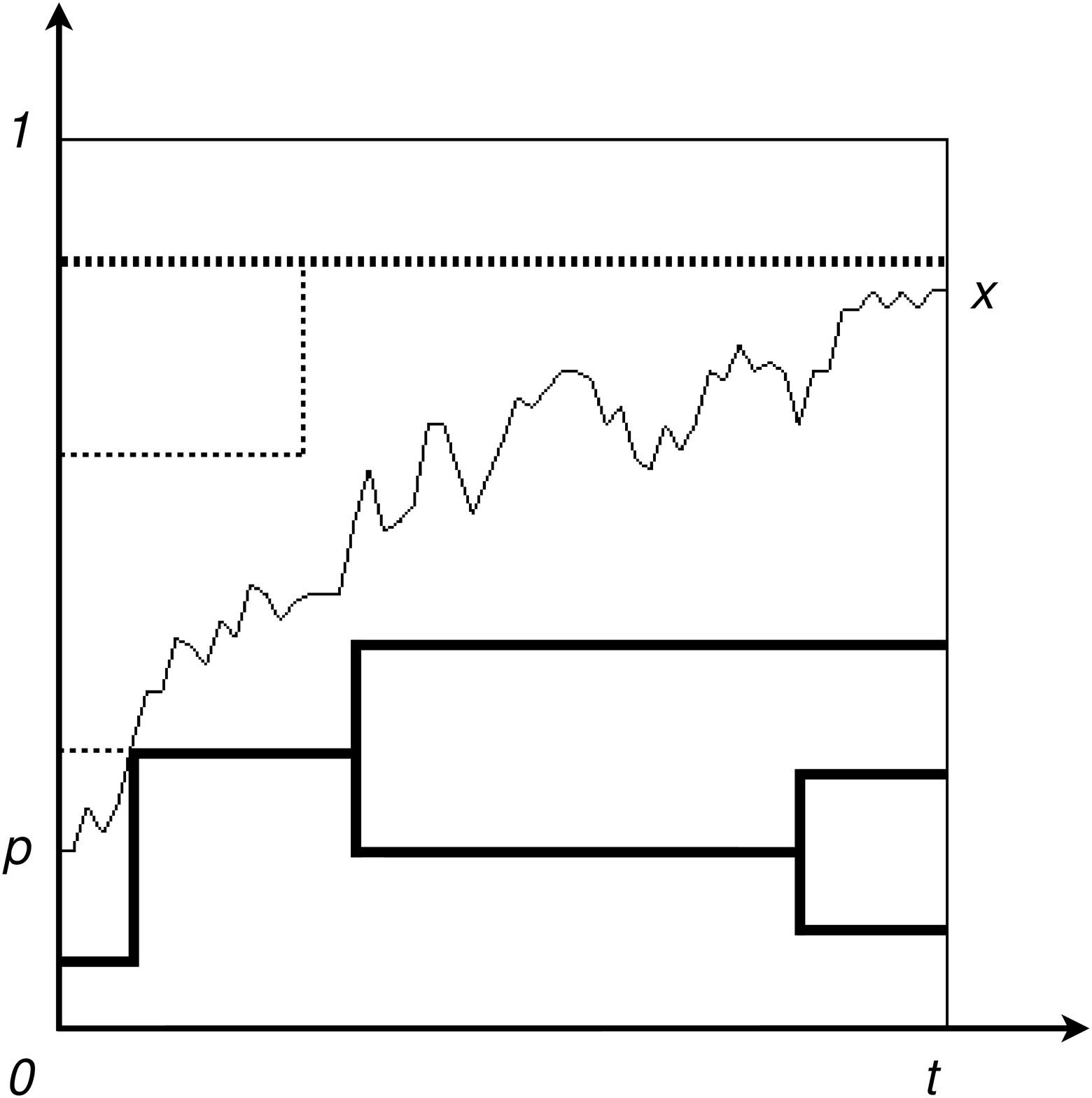}
\caption{}
\end{figure}

\newpage

\begin{center}
\begin{figure}[!hbt]
\psfrag{c}{$c$}
\psfrag{eta}{$\eta(\infty,c)$}
\includegraphics[scale=1.0]{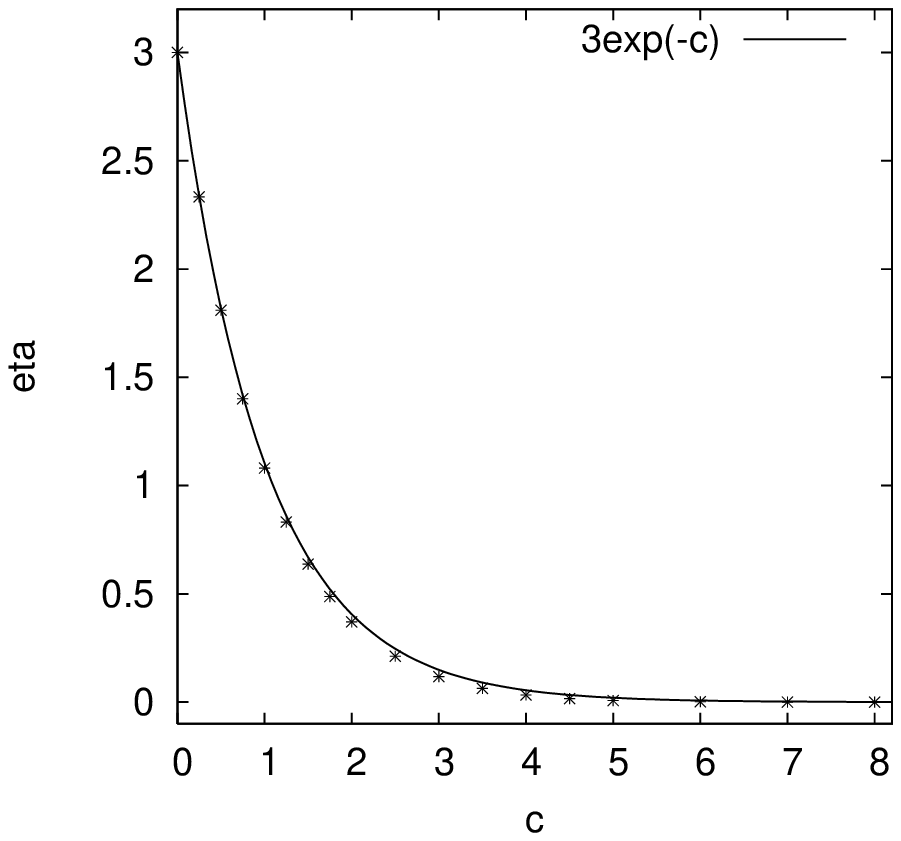}
\caption{}
\end{figure}
\end{center}

\newpage

\begin{center}
\begin{figure}[!hbt]
\begin{tabular}{cc}
\psfrag{t}{$t$}
\includegraphics[scale=0.7]{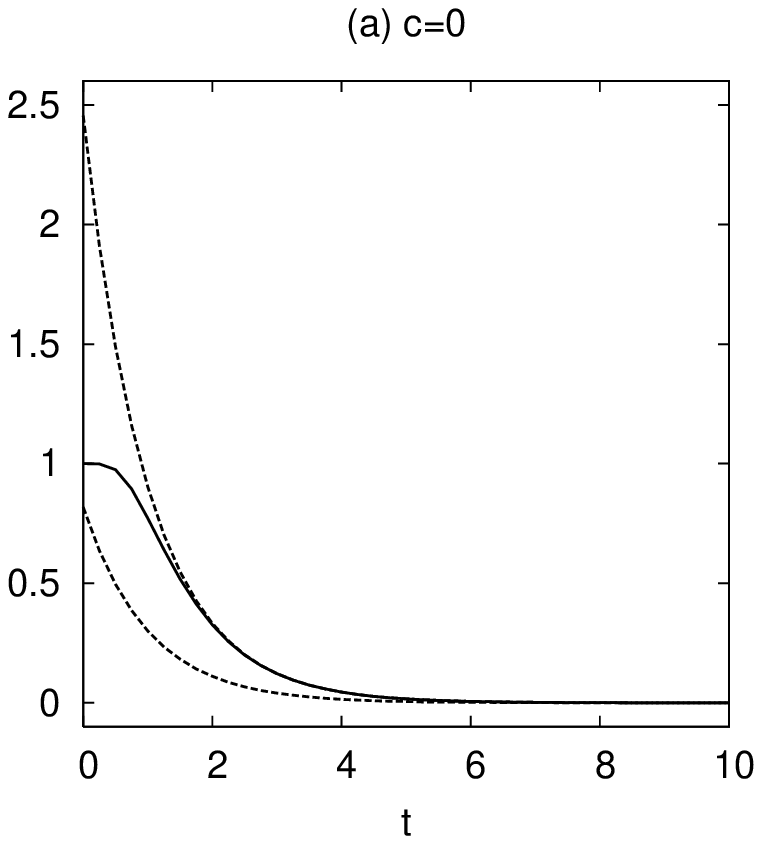}&
\psfrag{t}{$t$}
\includegraphics[scale=0.7]{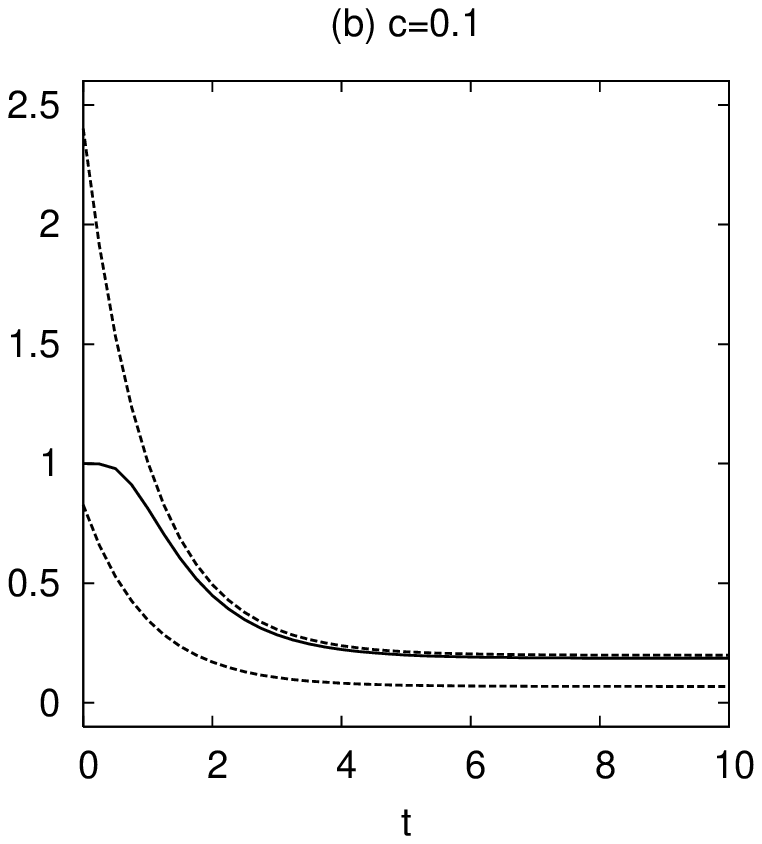}\\
\psfrag{t}{$t$}
\includegraphics[scale=0.7]{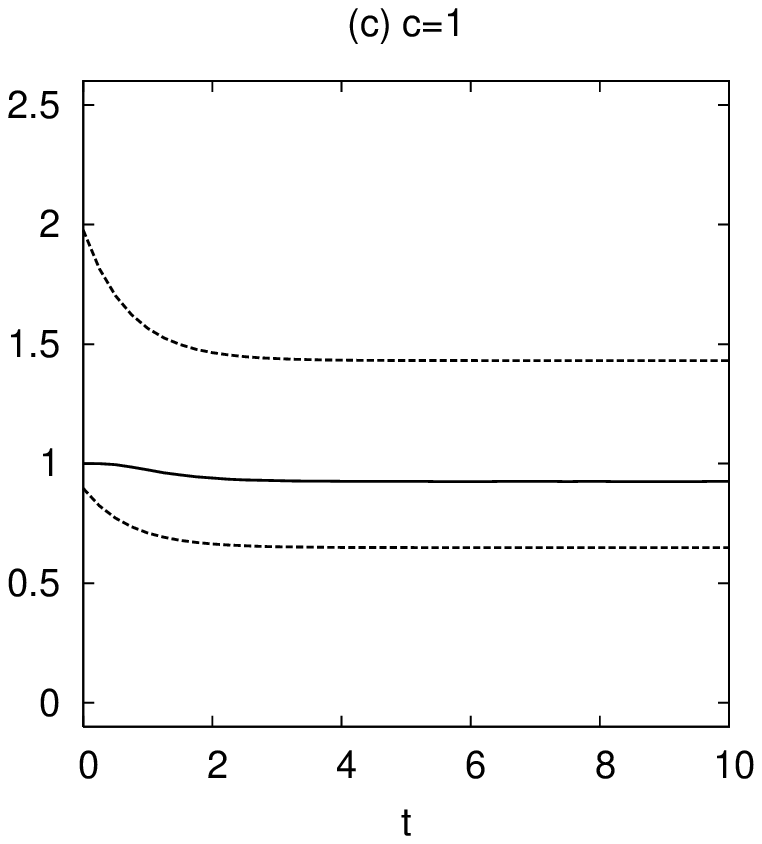}&
\psfrag{t}{$t$}
\includegraphics[scale=0.7]{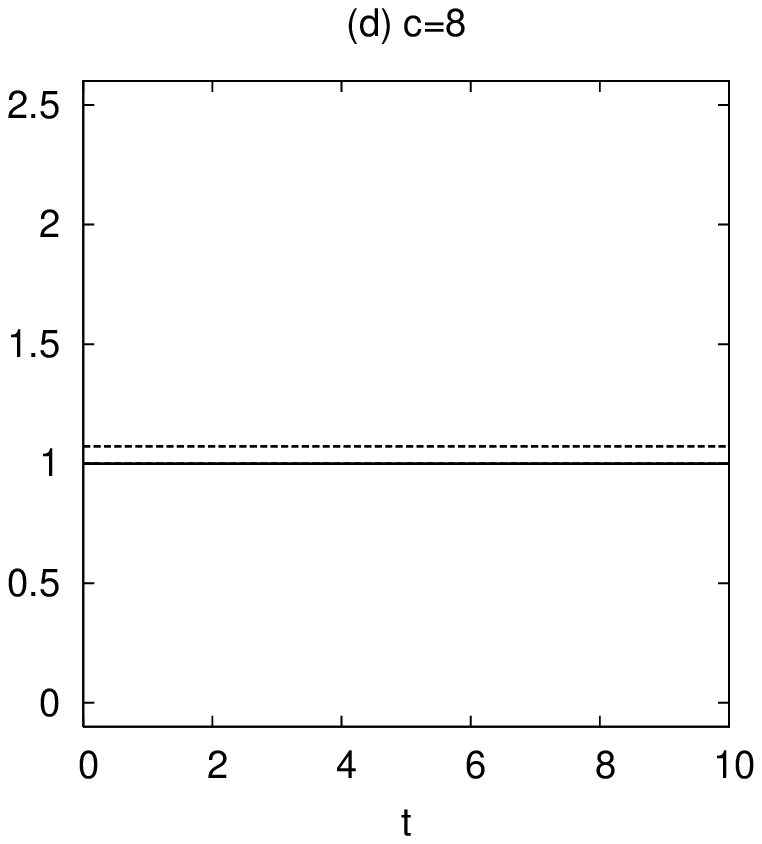}
\end{tabular}
\caption{}
\end{figure}
\end{center}

\newpage

\begin{center}
\begin{figure}[!hbt]
\psfrag{c}{$c$}
\includegraphics[scale=1.0]{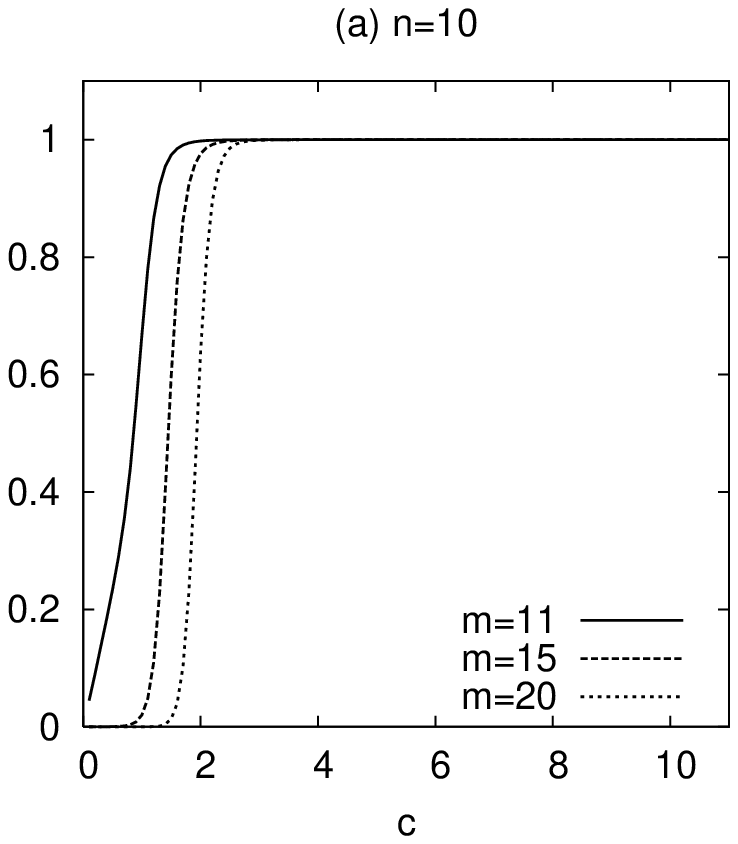}\\
\psfrag{c}{$c$}
\includegraphics[scale=1.0]{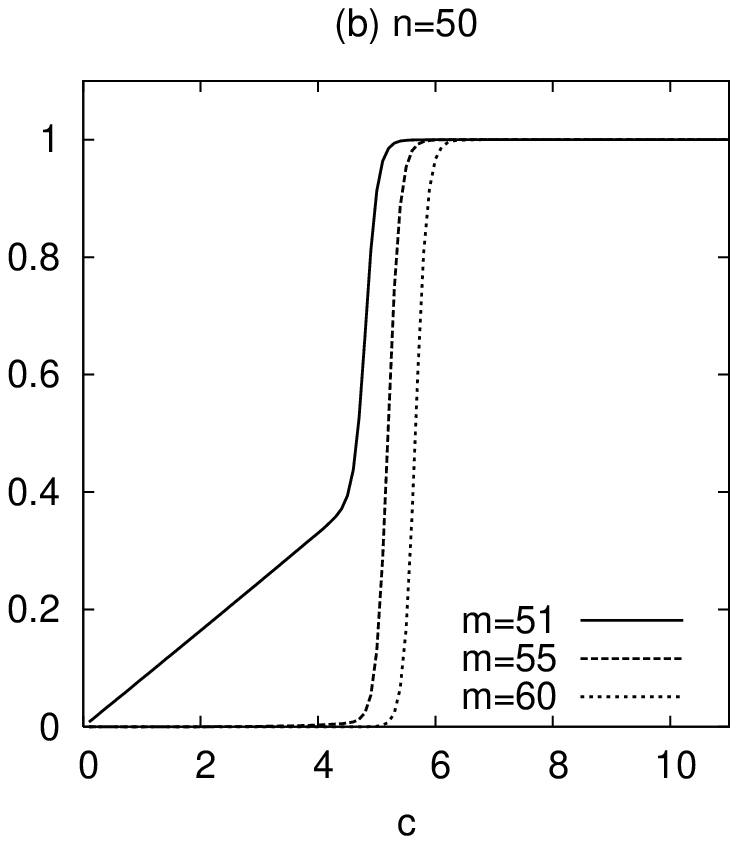}
\caption{}
\end{figure}
\end{center}

\newpage

\begin{center}
\begin{figure}[!hbt]
\psfrag{t}{$t$}
\includegraphics[scale=1.0]{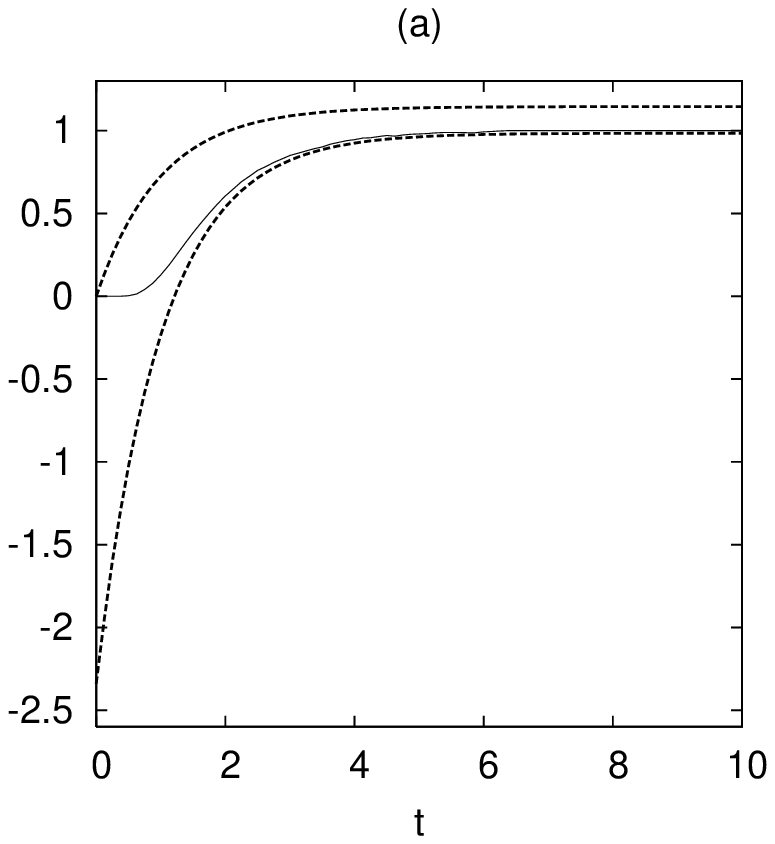}\\
\psfrag{t}{$t$}
\includegraphics[scale=1.0]{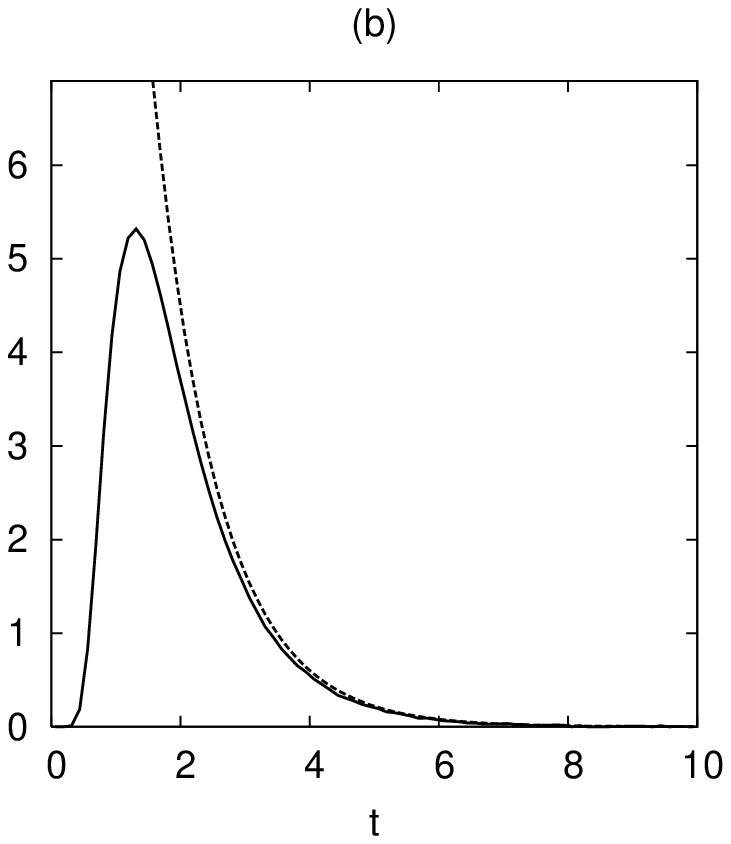}
\caption{}
\end{figure}
\end{center}

\end{document}